\titleformat{\subsection}[runin]% runin puts it in the same paragraph
       {\normalfont\bfseries}% formatting commands to apply to the whole heading
       {\thesubsection}% the label and number
       {0.5em}% space between label/number and subsection title
       {}% formatting commands applied just to subsection title
       [.]% punctuation or other commands following subsection title
\tikzstyle{env}=[copoint,regular polygon rotate=0,minimum width=0.2cm, fill=black]
\tikzstyle{probs}=[shape=semicircle,fill=white,draw=black,shape border rotate=180,minimum width=1.2cm]
\tikzstyle{every picture}=[baseline=-0.25em,scale=0.5]
\tikzstyle{dotpic}=[] % for backwards-compatibility
\tikzstyle{diredges}=[every to/.style={diredge}]
\tikzstyle{math matrix}=[matrix of math nodes,left delimiter=(,right delimiter=),inner sep=2pt,column sep=1em,row sep=0.5em,nodes={inner sep=0pt},text height=1.5ex, text depth=0.25ex]
\tikzstyle{gs edge}=[]
\tikzstyle{gs double edge}=[double,shorten <=-1mm,shorten >=-1mm,double distance=2pt]
\tikzstyle{inline text}=[text height=1.5ex, text depth=0.25ex,yshift=0.5mm]
\tikzstyle{label}=[font=\footnotesize,text height=1.5ex, text depth=0.25ex,yshift=0.5mm]
\tikzstyle{left label}=[label,anchor=east,xshift=1.5mm]
\tikzstyle{right label}=[label,anchor=west,xshift=-1.5mm]
\tikzstyle{braceedge}=[decorate,decoration={brace,amplitude=2mm,raise=-1mm}]
\tikzstyle{small braceedge}=[decorate,decoration={brace,amplitude=1mm,raise=-1mm}]
\tikzstyle{doubled}=[line width=1.6pt] % set the line width for all doubled (quantum) maps/wires
\tikzstyle{boldedge}=[doubled,shorten <=-0.17mm,shorten >=-0.17mm]
\tikzstyle{boldedgegray}=[doubled,gray,shorten <=-0.17mm,shorten >=-0.17mm]
\tikzstyle{singleedgegray}=[gray]%,shorten <=-0.1mm,shorten >=-0.1mm]
\tikzstyle{semidoubled}=[line width=1.4pt] % set the line width for all doubled (quantum) maps/wires
\tikzstyle{semiboldedgegray}=[semidoubled,gray,shorten <=-0.17mm,shorten >=-0.17mm]
\tikzstyle{boxedge}=[semiboldedgegray]
\tikzstyle{boldedgedashed}=[very thick,dashed,shorten <=-0.17mm,shorten >=-0.17mm]
\tikzstyle{vboldedgedashed}=[doubled,dashed,shorten <=-0.17mm,shorten >=-0.17mm]
\tikzstyle{left hook arrow}=[left hook-latex]
\tikzstyle{right hook arrow}=[right hook-latex]
\tikzstyle{sembracket}=[line width=0.5pt,shorten <=-0.07mm,shorten >=-0.07mm]
\tikzstyle{causal edge}=[->,thick,gray]
\tikzstyle{causal nondir}=[thick,gray]
\tikzstyle{timeline}=[thick,gray, dashed]
\tikzstyle{cedge}=[<->,thick,gray!70!white]
\tikzstyle{empty diagram}=[draw=gray!40!white,dashed,shape=rectangle,minimum width=1cm,minimum height=1cm]
\tikzstyle{empty diagram small}=[draw=gray!50!white,dashed,shape=rectangle,minimum width=0.6cm,minimum height=0.5cm]
\tikzstyle{dot}=[inner sep=0mm,minimum width=2mm,minimum height=2mm,draw,shape=circle]  
\tikzstyle{Wsquare}=[white dot, shape=regular polygon, rounded corners=0.8 mm, minimum size=3.3 mm, regular polygon sides=3, outer sep=-0.2mm]
\tikzstyle{Wsquareadj}=[white dot, shape=regular polygon, rounded corners=0.8 mm, minimum size=3.3 mm, regular polygon sides=3, outer sep=-0.2mm, regular polygon rotate=180]
\tikzstyle{ddot}=[inner sep=0mm, doubled, minimum width=2.5mm,minimum height=2.5mm,draw,shape=circle]
\tikzstyle{black dot}=[dot,fill=black]
\tikzstyle{white dot}=[dot,fill=white,,text depth=-0.2mm]
\tikzstyle{white Wsquare}=[Wsquare,fill=white,,text depth=-0.2mm]
\tikzstyle{white Wsquareadj}=[Wsquareadj,fill=white,,text depth=-0.2mm]
\tikzstyle{green dot}=[white dot] % for backwards-compatibility
\tikzstyle{gray dot}=[dot,fill=gray!40!white,,text depth=-0.2mm]
\tikzstyle{red dot}=[gray dot] % for backwards-compatibility
\tikzstyle{Z}=[white dot]
\tikzstyle{X}=[gray dot]
\tikzstyle{simple}=[]
\tikzstyle{black ddot}=[ddot,fill=black]
\tikzstyle{white ddot}=[ddot,fill=white]
\tikzstyle{gray ddot}=[ddot,fill=gray!40!white]
\tikzstyle{gray edge}=[gray!40!white]
\tikzstyle{small dot}=[inner sep=1pt,minimum width=0pt,minimum height=0pt,draw,shape=circle]
\tikzstyle{small black dot}=[small dot,fill=black]
\tikzstyle{small white dot}=[small dot,fill=white]
\tikzstyle{small gray dot}=[small dot,fill=gray!40!white]
\tikzstyle{special dot} = [small white dot]
\tikzstyle{mbqc dot}=[small black dot]
\tikzstyle{mbqc input dot}=[small white dot]
\tikzstyle{mbqc output dot}=[small gray dot]
\tikzstyle{causal dot}=[inner sep=0.4mm,minimum width=0pt,minimum height=0pt,draw=white,shape=circle,fill=gray!40!white]
\tikzstyle{phase dimensions}=[minimum size=5mm,font=\footnotesize,rectangle,rounded corners=2mm,inner sep=0.2mm,outer sep=-2mm,scale=0.8]
\tikzstyle{dphase dimensions}=[minimum size=5mm,font=\footnotesize,rectangle,rounded corners=2.5mm,inner sep=0.2mm,outer sep=-2mm]
\tikzstyle{white phase dot}=[dot,fill=white,phase dimensions]
\tikzstyle{white phase ddot}=[ddot,fill=white,dphase dimensions]
\tikzstyle{white rect ddot}=[draw=black,fill=white,doubled,minimum size=5mm,font=\footnotesize,rectangle,rounded corners=2.5mm,inner sep=0.2mm]
\tikzstyle{gray rect ddot}=[draw=black,fill=gray!40!white,doubled,minimum size=6mm,font=\footnotesize,rectangle,rounded corners=3mm]
\tikzstyle{gray phase dot}=[dot,fill=gray!40!white,phase dimensions]
\tikzstyle{gray phase ddot}=[ddot,fill=gray!40!white,dphase dimensions]
\tikzstyle{grey phase dot}=[gray phase dot]
\tikzstyle{grey phase ddot}=[gray phase ddot]
\tikzstyle{small phase dimensions}=[minimum size=4mm,font=\tiny,rectangle,rounded corners=2mm,inner sep=0.2mm,outer sep=-2mm]
\tikzstyle{small dphase dimensions}=[minimum size=4mm,font=\tiny,rectangle,rounded corners=2mm,inner sep=0.2mm,outer sep=-2mm]
\tikzstyle{small gray phase dot}=[dot,fill=gray!40!white,small phase dimensions]
\tikzstyle{small gray phase ddot}=[ddot,fill=gray!40!white,small dphase dimensions]
\tikzstyle{small map}=[draw,shape=rectangle,minimum height=4mm,minimum width=4mm,fill=white]
\tikzstyle{cnot}=[fill=white,shape=circle,inner sep=-1.4pt]
\tikzstyle{asym hadamard}=[fill=white,draw,shape=NEbox,inner sep=0.6mm,font=\footnotesize,minimum height=4mm]
\tikzstyle{asym hadamard conj}=[fill=white,draw,shape=NWbox,inner sep=0.6mm,font=\footnotesize,minimum height=4mm]
\tikzstyle{asym hadamard dag}=[fill=white,draw,shape=SEbox,inner sep=0.6mm,font=\footnotesize,minimum height=4mm]
\tikzstyle{hadamard}=[fill=white,draw,inner sep=0.6mm,font=\footnotesize,minimum height=4mm,minimum width=4mm]
\tikzstyle{small hadamard}=[fill=white,draw,inner sep=0.6mm,minimum height=1.5mm,minimum width=1.5mm]
\tikzstyle{small hadamard rotate}=[small hadamard,rotate=45]
\tikzstyle{dhadamard}=[hadamard,doubled]
\tikzstyle{small dhadamard}=[small hadamard,doubled]
\tikzstyle{small dhadamard rotate}=[small hadamard rotate,doubled]
\tikzstyle{antipode}=[white dot,inner sep=0.3mm,font=\footnotesize]
\tikzstyle{scalar}=[diamond,draw,inner sep=0.5pt,font=\small]
\tikzstyle{dscalar}=[diamond,doubled, draw,inner sep=0.5pt,font=\small]
\tikzstyle{small box}=[rectangle,inline text,fill=white,draw,minimum height=5mm,yshift=-0.5mm,minimum width=5mm,font=\small]
\tikzstyle{small gray box}=[small box,fill=gray!30]
\tikzstyle{medium box}=[rectangle,inline text,fill=white,draw,minimum height=5mm,yshift=-0.5mm,minimum width=10mm,font=\small]
\tikzstyle{square box}=[small box] % for backwards-compatibility
\tikzstyle{medium gray box}=[small box,fill=gray!30]
\tikzstyle{semilarge box}=[rectangle,inline text,fill=white,draw,minimum height=5mm,yshift=-0.5mm,minimum width=12.5mm,font=\small]
\tikzstyle{large box}=[rectangle,inline text,fill=white,draw,minimum height=5mm,yshift=-0.5mm,minimum width=15mm,font=\small]
\tikzstyle{large gray box}=[small box,fill=gray!30]
\tikzstyle{Bayes box}=[rectangle,fill=black,draw, minimum height=3mm, minimum width=3mm]
\tikzstyle{gray square point}=[small box,fill=gray!50]
\tikzstyle{dphase box white}=[dhadamard]
\tikzstyle{dphase box gray}=[dhadamard,fill=gray!50!white]
\tikzstyle{phase box white}=[hadamard]
\tikzstyle{phase box gray}=[hadamard,fill=gray!50!white]
\tikzstyle{point}=[regular polygon,regular polygon sides=3,draw,scale=0.75,inner sep=-0.5pt,minimum width=9mm,fill=white,regular polygon rotate=180]
\tikzstyle{copoint}=[regular polygon,regular polygon sides=3,draw,scale=0.75,inner sep=-0.5pt,minimum width=9mm,fill=white]
\tikzstyle{dpoint}=[point,doubled]
\tikzstyle{dcopoint}=[copoint,doubled]
\tikzstyle{wide copoint}=[fill=white,draw,shape=isosceles triangle,shape border rotate=90,isosceles triangle stretches=true,inner sep=0pt,minimum width=1.5cm,minimum height=6.12mm]
\tikzstyle{wide point}=[fill=white,draw,shape=isosceles triangle,shape border rotate=-90,isosceles triangle stretches=true,inner sep=0pt,minimum width=1.5cm,minimum height=6.12mm,yshift=-0.0mm]
\tikzstyle{wide point plus}=[fill=white,draw,shape=isosceles triangle,shape border rotate=-90,isosceles triangle stretches=true,inner sep=0pt,minimum width=1.74cm,minimum height=7mm,yshift=-0.0mm]
\tikzstyle{wide dpoint}=[fill=white,doubled,draw,shape=isosceles triangle,shape border rotate=-90,isosceles triangle stretches=true,inner sep=0pt,minimum width=1.5cm,minimum height=6.12mm,yshift=-0.0mm]
\tikzstyle{tinypoint}=[regular polygon,regular polygon sides=3,draw,scale=0.55,inner sep=-0.15pt,minimum width=6mm,fill=white,regular polygon rotate=180] 
\tikzstyle{white point}=[point]
\tikzstyle{white dpoint}=[dpoint]
\tikzstyle{green point}=[white point] % for backwards-compatibility
\tikzstyle{white copoint}=[copoint]
\tikzstyle{gray point}=[point,fill=gray!40!white]
\tikzstyle{gray dpoint}=[gray point,doubled]
\tikzstyle{red point}=[gray point] % for backwards-compatibility
\tikzstyle{gray copoint}=[copoint,fill=gray!40!white]
\tikzstyle{gray dcopoint}=[gray copoint,doubled]
\tikzstyle{white point guide}=[regular polygon,regular polygon sides=3,font=\scriptsize,draw,scale=0.65,inner sep=-0.5pt,minimum width=9mm,fill=white,regular polygon rotate=180]
\tikzstyle{black point}=[point,fill=black,font=\color{white}]
\tikzstyle{black copoint}=[copoint,fill=black,font=\color{white}]
\tikzstyle{tiny gray point}=[tinypoint,fill=gray!40!white]
\tikzstyle{diredge}=[->]
\tikzstyle{ddiredge}=[<->]
\tikzstyle{rdiredge}=[<-]
\tikzstyle{thickdiredge}=[->, very thick]
\tikzstyle{pointer edge}=[->,very thick,gray]
\tikzstyle{pointer edge part}=[very thick,gray]
\tikzstyle{dashed edge}=[dashed]
\tikzstyle{thick dashed edge}=[very thick,dashed]
\tikzstyle{thick gray dashed edge}=[thick dashed edge,gray!40]
\tikzstyle{thick map edge}=[very thick,|->]
\newcommand{\boxshape}[3]{%
\pgfdeclareshape{#1}{
\inheritsavedanchors[from=rectangle] % this is nearly a rectangle
\inheritanchorborder[from=rectangle]
\inheritanchor[from=rectangle]{center}
\inheritanchor[from=rectangle]{north}
\inheritanchor[from=rectangle]{south}
\inheritanchor[from=rectangle]{west}
\inheritanchor[from=rectangle]{east}
% ... and possibly more
\backgroundpath{% this is new
% store lower right in xa/ya and upper right in xb/yb
\southwest \pgf@xa=\pgf@x \pgf@ya=\pgf@y
\northeast \pgf@xb=\pgf@x \pgf@yb=\pgf@y

\@tempdima=#2
\@tempdimb=#3

\pgfpathmoveto{\pgfpoint{\pgf@xa - 5pt + \@tempdima}{\pgf@ya}}
\pgfpathlineto{\pgfpoint{\pgf@xa - 5pt - \@tempdima}{\pgf@yb}}
\pgfpathlineto{\pgfpoint{\pgf@xb + 5pt + \@tempdimb}{\pgf@yb}}
\pgfpathlineto{\pgfpoint{\pgf@xb + 5pt - \@tempdimb}{\pgf@ya}}
\pgfpathlineto{\pgfpoint{\pgf@xa - 5pt + \@tempdima}{\pgf@ya}}
\pgfpathclose
}
}}
\tikzstyle{cloud}=[shape=cloud,draw,minimum width=1.5cm,minimum height=1.5cm]
\tikzstyle{map}=[draw,shape=NEbox,inner sep=2pt,minimum height=6mm,fill=white]
\tikzstyle{dashedmap}=[draw,dashed,shape=NEbox,inner sep=2pt,minimum height=6mm,fill=white]
\tikzstyle{mapdag}=[draw,shape=SEbox,inner sep=2pt,minimum height=6mm,fill=white]
\tikzstyle{mapadj}=[draw,shape=SEbox,inner sep=2pt,minimum height=6mm,fill=white]
\tikzstyle{maptrans}=[draw,shape=SWbox,inner sep=2pt,minimum height=6mm,fill=white]
\tikzstyle{mapconj}=[draw,shape=NWbox,inner sep=2pt,minimum height=6mm,fill=white]
\tikzstyle{medium map}=[draw,shape=NEbox,inner sep=2pt,minimum height=6mm,fill=white,minimum width=7mm]
\tikzstyle{medium map dag}=[draw,shape=SEbox,inner sep=2pt,minimum height=6mm,fill=white,minimum width=7mm]
\tikzstyle{medium map adj}=[draw,shape=SEbox,inner sep=2pt,minimum height=6mm,fill=white,minimum width=7mm]
\tikzstyle{medium map trans}=[draw,shape=SWbox,inner sep=2pt,minimum height=6mm,fill=white,minimum width=7mm]
\tikzstyle{medium map conj}=[draw,shape=NWbox,inner sep=2pt,minimum height=6mm,fill=white,minimum width=7mm]
\tikzstyle{semilarge map}=[draw,shape=NEbox,inner sep=2pt,minimum height=6mm,fill=white,minimum width=9.5mm]
\tikzstyle{semilarge map trans}=[draw,shape=SWbox,inner sep=2pt,minimum height=6mm,fill=white,minimum width=9.5mm]
\tikzstyle{semilarge map adj}=[draw,shape=SEbox,inner sep=2pt,minimum height=6mm,fill=white,minimum width=9.5mm]
\tikzstyle{semilarge map dag}=[draw,shape=SEbox,inner sep=2pt,minimum height=6mm,fill=white,minimum width=9.5mm]
\tikzstyle{semilarge map conj}=[draw,shape=NWbox,inner sep=2pt,minimum height=6mm,fill=white,minimum width=9.5mm]
\tikzstyle{large map}=[draw,shape=NEbox,inner sep=2pt,minimum height=6mm,fill=white,minimum width=12mm]
\tikzstyle{large map conj}=[draw,shape=NWbox,inner sep=2pt,minimum height=6mm,fill=white,minimum width=12mm]
\tikzstyle{very large map}=[draw,shape=NEbox,inner sep=2pt,minimum height=6mm,fill=white,minimum width=17mm]
\tikzstyle{medium dmap}=[draw,doubled,shape=NEbox,inner sep=2pt,minimum height=6mm,fill=white,minimum width=7mm]
\tikzstyle{medium dmap dag}=[draw,doubled,shape=SEbox,inner sep=2pt,minimum height=6mm,fill=white,minimum width=7mm]
\tikzstyle{medium dmap adj}=[draw,doubled,shape=SEbox,inner sep=2pt,minimum height=6mm,fill=white,minimum width=7mm]
\tikzstyle{medium dmap trans}=[draw,doubled,shape=SWbox,inner sep=2pt,minimum height=6mm,fill=white,minimum width=7mm]
\tikzstyle{medium dmap conj}=[draw,doubled,shape=NWbox,inner sep=2pt,minimum height=6mm,fill=white,minimum width=7mm]
\tikzstyle{semilarge dmap}=[draw,doubled,shape=NEbox,inner sep=2pt,minimum height=6mm,fill=white,minimum width=9.5mm]
\tikzstyle{semilarge dmap trans}=[draw,doubled,shape=SWbox,inner sep=2pt,minimum height=6mm,fill=white,minimum width=9.5mm]
\tikzstyle{semilarge dmap adj}=[draw,doubled,shape=SEbox,inner sep=2pt,minimum height=6mm,fill=white,minimum width=9.5mm]
\tikzstyle{semilarge dmap dag}=[draw,doubled,shape=SEbox,inner sep=2pt,minimum height=6mm,fill=white,minimum width=9.5mm]
\tikzstyle{semilarge dmap conj}=[draw,doubled,shape=NWbox,inner sep=2pt,minimum height=6mm,fill=white,minimum width=9.5mm]
\tikzstyle{large dmap}=[draw,doubled,shape=NEbox,inner sep=2pt,minimum height=6mm,fill=white,minimum width=12mm]
\tikzstyle{large dmap conj}=[draw,doubled,shape=NWbox,inner sep=2pt,minimum height=6mm,fill=white,minimum width=12mm]
\tikzstyle{large dmap trans}=[draw,doubled,shape=SWbox,inner sep=2pt,minimum height=6mm,fill=white,minimum width=12mm]
\tikzstyle{large dmap adj}=[draw,doubled,shape=SEbox,inner sep=2pt,minimum height=6mm,fill=white,minimum width=12mm]
\tikzstyle{large dmap dag}=[draw,doubled,shape=SEbox,inner sep=2pt,minimum height=6mm,fill=white,minimum width=12mm]
\tikzstyle{very large dmap}=[draw,doubled,shape=NEbox,inner sep=2pt,minimum height=6mm,fill=white,minimum width=19.5mm]
\tikzstyle{muxbox}=[draw,shape=rectangle,minimum height=3mm,minimum width=3mm,fill=white]
\tikzstyle{dmuxbox}=[muxbox,doubled]
\tikzstyle{box}=[draw,shape=rectangle,inner sep=2pt,minimum height=6mm,minimum width=6mm,fill=white]
\tikzstyle{dbox}=[draw,doubled,shape=rectangle,inner sep=2pt,minimum height=6mm,minimum width=6mm,fill=white]
\tikzstyle{dmap}=[draw,doubled,shape=NEbox,inner sep=2pt,minimum height=6mm,fill=white]
\tikzstyle{dmapdag}=[draw,doubled,shape=SEbox,inner sep=2pt,minimum height=6mm,fill=white]
\tikzstyle{dmapadj}=[draw,doubled,shape=SEbox,inner sep=2pt,minimum height=6mm,fill=white]
\tikzstyle{dmaptrans}=[draw,doubled,shape=SWbox,inner sep=2pt,minimum height=6mm,fill=white]
\tikzstyle{dmapconj}=[draw,doubled,shape=NWbox,inner sep=2pt,minimum height=6mm,fill=white]
\tikzstyle{ddmap}=[draw,doubled,dashed,shape=NEbox,inner sep=2pt,minimum height=6mm,fill=white]
\tikzstyle{ddmapdag}=[draw,doubled,dashed,shape=SEbox,inner sep=2pt,minimum height=6mm,fill=white]
\tikzstyle{ddmapadj}=[draw,doubled,dashed,shape=SEbox,inner sep=2pt,minimum height=6mm,fill=white]
\tikzstyle{ddmaptrans}=[draw,doubled,dashed,shape=SWbox,inner sep=2pt,minimum height=6mm,fill=white]
\tikzstyle{ddmapconj}=[draw,doubled,dashed,shape=NWbox,inner sep=2pt,minimum height=6mm,fill=white]
\tikzstyle{smap}=[draw,shape=sNEbox,fill=white]
\tikzstyle{smapdag}=[draw,shape=sSEbox,fill=white]
\tikzstyle{smapadj}=[draw,shape=sSEbox,fill=white]
\tikzstyle{smaptrans}=[draw,shape=sSWbox,fill=white]
\tikzstyle{smapconj}=[draw,shape=sNWbox,fill=white]
\tikzstyle{dsmap}=[draw,dashed,shape=sNEbox,fill=white]
\tikzstyle{dsmapdag}=[draw,dashed,shape=sSEbox,fill=white]
\tikzstyle{dsmaptrans}=[draw,dashed,shape=sSWbox,fill=white]
\tikzstyle{dsmapconj}=[draw,dashed,shape=sNWbox,fill=white]
\tikzstyle{mmap}=[draw,shape=mNEbox]
\tikzstyle{mmapdag}=[draw,shape=mSEbox]
\tikzstyle{mmaptrans}=[draw,shape=mSWbox]
\tikzstyle{mmapconj}=[draw,shape=mNWbox]
\tikzstyle{mmapgray}=[draw,fill=gray!40!white,shape=mNEbox]
\tikzstyle{smapgray}=[draw,fill=gray!40!white,shape=sNEbox]
\pgfmathsetmacro{\pgf@shorten@left}{\pgfkeysvalueof{/tikz/shorten left}}
\pgfmathsetmacro{\pgf@shorten@right}{\pgfkeysvalueof{/tikz/shorten right}}
\pgfmathsetmacro{\pgf@shorten@left}{\pgfkeysvalueof{/tikz/shorten left}}
\pgfmathsetmacro{\pgf@shorten@right}{\pgfkeysvalueof{/tikz/shorten right}}
\tikzstyle{kpoint common}=[draw,fill=white,inner sep=1pt,minimum height=4mm]
\tikzstyle{kpoint sc}=[shape=cornerpoint,kpoint common]
\tikzstyle{kpoint adjoint sc}=[shape=cornercopoint,kpoint common]
\tikzstyle{kpoint}=[shape=cornerpoint,shorten left=5pt,kpoint common]
\tikzstyle{kpoint adjoint}=[shape=cornercopoint,shorten left=5pt,kpoint common]
\tikzstyle{kpoint conjugate}=[shape=cornerpoint,shorten right=5pt,kpoint common]
\tikzstyle{kpoint transpose}=[shape=cornercopoint,shorten right=5pt,kpoint common]
\tikzstyle{kpoint symm}=[shape=cornerpoint,shorten left=5pt,shorten right=5pt,kpoint common]
\tikzstyle{black kpoint}=[shape=cornerpoint,shorten left=5pt,kpoint common,fill=black,font=\color{white}]
\tikzstyle{black kpoint adjoint}=[shape=cornercopoint,shorten left=5pt,kpoint common,fill=black,font=\color{white}]
\tikzstyle{black kpointadj}=[shape=cornercopoint,shorten left=5pt,kpoint common,fill=black,font=\color{white}]
\tikzstyle{black dkpoint}=[shape=cornerpoint,shorten left=5pt,kpoint common,fill=black, doubled,font=\color{white}]
\tikzstyle{black dkpoint adjoint}=[shape=cornercopoint,shorten left=5pt,kpoint common,fill=black, doubled,font=\color{white}]
\tikzstyle{black dkpointadj}=[shape=cornercopoint,shorten left=5pt,kpoint common,fill=black, doubled,font=\color{white}] 
\tikzstyle{kpointdag}=[kpoint adjoint]
\tikzstyle{kpointadj}=[kpoint adjoint]
\tikzstyle{kpointconj}=[kpoint conjugate]
\tikzstyle{kpointtrans}=[kpoint transpose]
\tikzstyle{big kpoint}=[kpoint, minimum width=1.2 cm, minimum height=8mm, inner sep=4pt, text depth=3mm]
\tikzstyle{wide kpoint}=[kpoint, minimum width=1 cm, inner sep=2pt]%, text depth=-0.7 mm]
\tikzstyle{wide kpointdag}=[kpointdag, minimum width=1 cm, inner sep=2pt]%, text depth=0.7 mm]
\tikzstyle{wide kpointconj}=[kpointconj, minimum width=1 cm, inner sep=2pt]%, text depth=-0.7 mm]
\tikzstyle{wide kpointtrans}=[kpointtrans, minimum width=1 cm, inner sep=2pt]%, text depth=0.7 mm]
\tikzstyle{gray kpoint}=[kpoint,fill=gray!50!white]
\tikzstyle{gray kpointdag}=[kpointdag,fill=gray!50!white]
\tikzstyle{gray kpointadj}=[kpointadj,fill=gray!50!white]
\tikzstyle{gray kpointconj}=[kpointconj,fill=gray!50!white]
\tikzstyle{gray kpointtrans}=[kpointtrans,fill=gray!50!white]
\tikzstyle{gray dkpoint}=[kpoint,fill=gray!50!white,doubled]
\tikzstyle{gray dkpointdag}=[kpointdag,fill=gray!50!white,doubled]
\tikzstyle{gray dkpointadj}=[kpointadj,fill=gray!50!white,doubled]
\tikzstyle{gray dkpointconj}=[kpointconj,fill=gray!50!white,doubled]
\tikzstyle{gray dkpointtrans}=[kpointtrans,fill=gray!50!white,doubled]
\tikzstyle{white label}=[draw,fill=white,rectangle,inner sep=0.7 mm]
\tikzstyle{gray label}=[draw,fill=gray!50!white,rectangle,inner sep=0.7 mm]
\tikzstyle{black label}=[draw,fill=black,rectangle,inner sep=0.7 mm]
\tikzstyle{dkpoint}=[kpoint,doubled]
\tikzstyle{wide dkpoint}=[wide kpoint,doubled]
\tikzstyle{dkpointdag}=[kpoint adjoint,doubled]
\tikzstyle{wide dkpointdag}=[wide kpointdag,doubled]
\tikzstyle{dkcopoint}=[kpoint adjoint,doubled]
\tikzstyle{dkpointadj}=[kpoint adjoint,doubled]
\tikzstyle{dkpointconj}=[kpoint conjugate,doubled]
\tikzstyle{dkpointtrans}=[kpoint transpose,doubled]
\tikzstyle{kscalar}=[kpoint common, shape=EBox, inner xsep=-1pt, inner ysep=3pt,font=\small]
\tikzstyle{kscalarconj}=[kpoint common, shape=WBox, inner xsep=-1pt, inner ysep=3pt,font=\small]
\tikzstyle{spekpoint}=[kpoint sc,minimum height=5mm,inner sep=3pt]
\tikzstyle{spekcopoint}=[kpoint adjoint sc,minimum height=5mm,inner sep=3pt]
\tikzstyle{dspekpoint}=[spekpoint,doubled]
\tikzstyle{dspekcopoint}=[spekcopoint,doubled]
 \tikzstyle{upground}=[circuit ee IEC,thick,ground,rotate=90,scale=1.0]
 \tikzstyle{downground}=[circuit ee IEC,thick,ground,rotate=-90,scale=1.0]
 \tikzstyle{bigground}=[regular polygon,regular polygon sides=3,draw=gray,scale=0.50,inner sep=-0.5pt,minimum width=10mm,fill=gray]
\tikzstyle{arrs}=[-latex,font=\small,auto]
\tikzstyle{arrow plain}=[arrs]
\tikzstyle{arrow dashed}=[dashed,arrs]
\tikzstyle{arrow bold}=[very thick,arrs]
\tikzstyle{arrow hide}=[draw=white!0,-]
\tikzstyle{arrow reverse}=[latex-]
\tikzstyle{cdnode}=[]
\newcommand{\unit}{\dotunit{dot}}
\tikzstyle{box}=[shape=rectangle, text height=1.5ex, text depth=0.25ex, yshift=0.5mm, fill=white, draw=black, minimum height=5mm, yshift=-0.5mm, minimum width=5mm, font={\small}]
\tikzstyle{Z dot}=[inner sep=0mm, minimum size=2mm, shape=circle, draw=black, fill={rgb,255: red,221; green,255; blue,221}]
\tikzstyle{Z phase dot}=[minimum size=5mm, font={\footnotesize\boldmath}, shape=rectangle, rounded corners=2mm, inner sep=0.2mm, outer sep=-2mm, scale=0.8, tikzit shape=circle, draw=black, fill={rgb,255: red,221; green,255; blue,221}, tikzit draw=blue]
\tikzstyle{Z dot bold}=[Z dot, thick]
\tikzstyle{Z phase dot bold}=[Z phase dot, thick]
\tikzstyle{X dot}=[Z dot, shape=circle, draw=black, fill={rgb,255: red,255; green,136; blue,136}]
\tikzstyle{X phase dot}=[Z phase dot, tikzit shape=circle, tikzit draw=blue, fill={rgb,255: red,255; green,136; blue,136}, font={\footnotesize\boldmath}]
\tikzstyle{X dot bold}=[X dot, thick]
\tikzstyle{X phase dot bold}=[X phase dot, thick]
\tikzstyle{hadamard}=[fill=yellow, draw=black, shape=rectangle, inner sep=0.6mm, minimum height=1.5mm, minimum width=1.5mm]
\tikzstyle{hadamard bold}=[hadamard, thick]
\tikzstyle{vertex}=[inner sep=0mm, minimum size=1mm, shape=circle, draw=black, fill=black]
\tikzstyle{vertex set}=[inner sep=0mm, minimum size=1mm, shape=circle, draw=black, fill=white, font={\footnotesize\boldmath}]
\tikzstyle{ground}=[ground, anchor=center, child anchor=center, text='$\ground$']
\tikzstyle{ground bold}=[ground, thick]
\tikzstyle{rfarr}=[draw, signal, fill=black, signal to=east, signal from=west, inner sep=1pt, minimum height=6pt]
\tikzstyle{lfarr}=[draw, signal, fill=black, signal to=west, signal from=east, inner sep=1pt, minimum height=6pt]
\tikzstyle{ufarr}=[draw, signal, fill=black, signal to=north, signal from=south, inner sep=1pt, minimum width=6pt]
\tikzstyle{dfarr}=[draw, signal, fill=black, signal to=south, signal from=north, inner sep=1pt, minimum width=6pt]
\tikzstyle{dashed edge}=[-, dashed, thick]
\tikzstyle{hadamard edge}=[-, dashed, dash pattern=on 2pt off 0.5pt, thick, draw={rgb,255: red,68; green,136; blue,255}]
\tikzstyle{brace edge}=[-, tikzit draw=blue, decorate, decoration={brace,amplitude=1mm,raise=-1mm}]
\tikzstyle{bold edge}=[-, thick]
\tikzstyle{diredge}=[->]
\tikzstyle{border edge}=[-, dashed, dash pattern=on 2pt off 0.5pt, thick, draw={rgb,255: red,255; green,13; blue,20}]
\tikzstyle{bit clone}=[small black dot]
\tikzstyle{qubit edge}=[-]
\tikzstyle{bit edge}=[-, double distance=1.5pt]
\tikzstyle{gather}=[fill=zx_grey, draw=black, tikzit category=scal, rounded corners=0.8mm, regular polygon, regular polygon sides=3, shape border rotate=-90, inner sep=1.6pt]
\tikzstyle{split}=[fill=zx_grey, draw=black, tikzit category=scal, rounded corners=0.8mm, regular polygon, regular polygon sides=3, shape border rotate=90, inner sep=1.6pt]
\tikzstyle{derelict}=[fill=zx_black, draw=black, tikzit category=scal, regular polygon, regular polygon sides=3, shape border rotate=90, inner sep=1.6pt]
\tikzstyle{clone}=[inner sep=0mm, minimum size=2mm, shape=circle, draw=black, fill={rgb,255: red,255; green,255; blue,255}]
\tikzstyle{bang box}=[shape=rectangle, text height=1.5ex, text depth=0.25ex, yshift=0.5mm, draw=black, minimum height=10mm, yshift=-0.5mm, minimum width=10mm, font={\small},
\newcommand{\bigo}{\ensuremath{\mathcal{O}}}
\newcommand{\N}{\ensuremath{\mathbb{N}}}
\newcommand{\R}{\ensuremath{\mathbb{R}}}
\newcommand{\Ftwo}{\ensuremath{\mathbb{F}_2}}
\newcommand{\diag}{\ensuremath{\mathscr{D}}} % Set of diagrams
  \renewcommand{\C}{\ensuremath{\mathbb{C}}}
  \newcommand{\C}{\ensuremath{\mathbb{C}}}
  \renewcommand{\U}{\mathtt{U}}
  \newcommand{\U}{\mathtt{U}}
  \renewcommand{\R}{\mathtt{R}}
  \newcommand{\R}{\mathtt{R}}
\newcommand{\ketbra}[2]{\ket{#1}\!\!\bra{#2}}
\newcommand{\imod}[2]{{#1}\ \text{mod}\ {#2}} % integer modulo, n mod k
\newcommand{\idiv}[2]{{#1}\ \text{div}\ {#2}} % integer div, n div k
\newcommand{\meas}{\mathtt{meas}}
\newcommand{\new}{\mathtt{new}}
\newcommand{\bit}{\mathtt{B}}
\newcommand{\qubit}{\mathtt{Q}}
\renewcommand{\unit}{\mathtt{Unit}}
\newcommand{\nat}{\mathtt{Nat}}
\renewcommand{\vec}[2]{\mathtt{Vec}\ {#2}\ {#1}}
\newcommand{\Qlet}[3]{\mathtt{let\ }\allowbreak #1 \allowbreak=\allowbreak #2 \mathtt{\ in\ } \allowbreak #3}
\newcommand{\Qfold}{\mathtt{fold}}
\newcommand{\Qmap}{\mathtt{map}}
\newcommand{\Qaccumap}{\mathtt{accuMap}}
\newcommand{\Qsplit}{\mathtt{split}}
\newcommand{\Qappend}{\mathtt{append}}
\newcommand{\Qcompose}{\mathtt{compose}}
\newcommand{\Qfor}[3]{\mathtt{for\ }{#1}\mathtt{\ in\ }{#2}\mathtt{\ do\ }{#3}}
\newcommand{\ifz}[3]{\mathtt{ifz\ }{#1}\mathtt{\ then\ }{#2}\mathtt{\ else\ }{#3}}
\newcommand{\vnil}{\mathtt{VNil}}
\newcommand{\Qrange}{\mathtt{range\ }}
\newcommand{\lambdaD}{\ensuremath{\lambda_D}}
\newcommand{\Qdrop}{\mathtt{drop}}
\newcommand{\trans}[1]{\left\llbracket{#1}\right\rrbracket}
\newcommand{\eval}[1]{\left\lfloor{#1}\right\rfloor}
\newcommand{\interpret}[1]{\left\{\!\!\!\left\{{#1}\right\}\!\!\!\right\}}
\newcommand{\ground}{%
    \begin{tikzpicture}[circuit ee IEC,yscale=1.0,xscale=1.0]
    \draw[solid,arrows=-] (0,1ex) to (0,0) node[anchor=center,ground,rotate=-90,xshift=.66ex] {};
    \end{tikzpicture}
}
\newcommand{\zxGND}{\ensuremath{\text{ZX}_{\ground}}}
\newcommand{\szxGND}{\ensuremath{\text{SZX}_{\ground}}}
\definecolor{orange}{RGB}{200,127,0}
\numberwithin{equation}{section}
\theoremstyle{plain}
\theoremstyle{plain}
\newtheorem{lemma}[equation]{Lemma}
\newtheorem*{lemmaN}[equation]{Lemma} % non-numbered
\theoremstyle{plain}
\newtheorem*{proof}{Proof}
\theoremstyle{plain}
\theoremstyle{plain}
\newif\ifreduced
\newcommand{\reducedorlong}[2]{\ifreduced{#1}\else{#2}\fi}
\title{Encoding High-level Quantum Programs as SZX-diagrams}
\author{
    Augustin Borgna
    \institute{CNRS LORIA, Inria-MOCQUA, \\ Université de Lorraine}
    \institute{CNRS, LMF, \\ Université Paris-Saclay}
    \email{agustin.borgna@loria.fr}
    \and
    Rafael Romero
    \institute{CONICET, Instituto de Ciencias de la Computación \\ Universidad de Buenos Aires, Buenos Aires, Argentina}
    \institute{PEDECIBA,\\ Universidad de la República-MEC. Montevideo, Uruguay}
    \email{lromero@dc.uba.ar}
}
\begin{document}

\maketitle

\begin{abstract}
The Scalable ZX-calculus is a compact graphical language used to
reason about linear maps between quantum states. These diagrams have multiple
applications, but they frequently have to be constructed in a case-by-case basis.
In this work we present a method to encode quantum programs implemented
in a fragment of the linear dependently typed Proto-Quipper-D language as families of SZX-diagrams.
We define a subset of translatable Proto-Quipper-D programs
and show that our procedure is able to encode non-trivial algorithms
as diagrams that grow linearly on the size of the program.
\end{abstract}

\section{Introduction}%
\label{sec:introduction}

The ZX calculus~\cite{vdw_working_cs_zx} has been used as intermediary representation language
for quantum programs in optimization methods~\cite{DKPW2019qcircSimpl,borgna_hybrid_2021,Backens_2021}
and in the design of error correcting schemes~\cite{de_beaudrap_zx_2020}.
The highly flexible representation of linear maps as open graphs
with a complete formal rewriting system
and the multiple extensions adapted to represent
different sets of quantum primitives
have proven useful in reasoning about the properties of quantum circuits.

Quantum operations are usually represented as quantum circuits
composed by primitive gates operating over a fixed number of qubits.
The ZX calculus has a close correspondence to this model and is similarly 
limited to representing operations at a single-qubit level.
In this work we will focus on the Scalable ZX extension~\cite{carette_szx-calculus_2019},
which generalizes the ZX diagrams to work with arbitrary qubit registers
using a compact representation.
Previous work~\cite{carette_quantum_2021} has shown that the SZX calculus
is capable of encoding nontrivial algorithms
via the presentation of multiple hand-written examples.
For an efficient usage as an intermediate representation language,
we require an automated compilation method from quantum programming languages to SZX diagrams.
While ZX diagrams can be directly obtained from a program compiled to a quantum circuit,
to the best of our knowledge there is no efficient method leveraging the parametricity of the SZX calculus.

There exist several quantum programming languages capable of encoding
high-level parametric programs~\cite{qiskit,cirq,Steiger2018projectQ}.
Quipper~\cite{Green2013quipper} is a language for quantum computing
capable or generating families of quantum operations indexed by parameters.
These parameters need to be instantiated at compile time to generate
concrete quantum circuit representations.
Quipper has multiple formal specifications, in this work we focus on the
linear dependently typed Proto-Quipper-D formalization~\cite{fu_linear_2021,fu_tutorial_2020}
to express high-level programs with integer parameters.

The contributions of this article the following.
We introduce a \textit{list initialization} notation to represent multiple elements
of a SZX diagram family composed in parallel.
We formally define a fragment of Proto-Quipper-D programs that can be described as families of diagrams.
Then we present a novel compilation method that encodes quantum programs as families of SZX diagrams
and demonstrate the codification and translation of a nontrivial algorithm using our procedure.

In Section~\ref{sec:background} we outline both languages and introduce the list
initialization notation. In Section~\ref{sec:fragment} we define the restricted
Proto-Quipper-D fragment. In Section~\ref{sec:translation} we introduce the
translation into SZX diagrams. Finally, in Section~\ref{sec:qft-example} we
demonstrate an encoding of the Quantum Fourier Transform algorithm using our
method. The proofs of the lemmas stated in this work can be found
\reducedorlong{in the long version of this document published in ArXiV}{in
Appendix~\ref{sec:proofs}}.

\section{Background}%
\label{sec:background}

We describe a quantum state as a system of $n$ qubits
corresponding to a vector in the $\C^{2^n}$ Hilbert space.
We may partition the set of qubits into multi-qubit \textit{registers}
representing logically related subsets.
Quantum computations under the QRAM model correspond to compositions of unitary
operators between these quantum states, called quantum gates.
Additionally, the qubits may be initialized on a set state and measured.

High-level programs can be encoded in Quipper~\cite{Green2013quipper},
a Haskell-like programming language for describing quantum computations.
In this work we use a formalization of the language called
Proto-Quipper-D\cite{fu_tutorial_2020} with support for linear and dependent types.
Concrete quantum operations correspond to linear functions between quantum states,
generated as a composition of primitive operations that can be described directly
as a quantum circuit.
Generic circuits may have additional parameters that must fixed at compilation time
to produce the corresponding quantum circuit.

In Section~\ref{sec:fragment} we describe a restricted fragment of the Proto-Quipper-D
language containing the relevant operations for the work presented in this paper.

\subsection{The Scalable ZX-calculus}%
\label{sec:szx}

The ZX calculus~\cite{vdw_working_cs_zx} is a formal graphical language
that encodes linear maps between quantum states.
Multiple extensions to the calculus have been proposed.
We first present the base calculus with the grounded-ZX extension,
denoted \zxGND~\cite{ground},
to allow us to encode quantum state measurement operations.
A \zxGND\ diagram is generated by the following primitives,
in addition to parallel and serial composition:
\[
    \tikzfig{szx/elem/spiderZ} : n_1 \to m_1
    \qquad
    \tikzfig{szx/elem/spiderX} : n_1 \to m_1
    \qquad
    \tikzfig{szx/elem/hadamard} : 1_1 \to 1_1
    \qquad
    \tikzfig{szx/elem/ground} : 1_1 \to 0_1
\]
\[
    \tikzfig{szx/elem/wire} : 1_1 \to 1_1
    \qquad
    \tikzfig{szx/elem/cup} : 0_1 \to 2_1
    \qquad
    \tikzfig{szx/elem/cap} : 2_1 \to 0_1
    \qquad
    \tikzfig{szx/elem/swap} : 2_1 \to 2_1
    \qquad
    \tikzfig{szx/elem/empty} : 0_1 \to 0_1
\]
where $n_k$ represents the n-tensor of k-qubit registers,
the green and red nodes are called Z and X spiders,
$\alpha \in [0,2\pi)$ is the phase of the spiders,
and the yellow square is called the Hadamard node.
These primitives allow us to encode any quantum operation,
but they can become impractical when working with multiple qubit registers.

The SZX calculus~\cite{carette_szx-calculus_2019,carette_quantum_2021}
is a \textit{Scalable} extension to the ZX-calculus
that generalizes the primitives to work with arbitrarily sized qubit registers.
This facilitates the representation of diagrams with repeated structure
in a compact manner. 
Carette et al.~\cite{carette_quantum_2021} show that the scalable and grounded extensions
can be directly composed.
Will refer to the resulting $\szxGND$-calculus as SZX for simplicity.
Bold wires in a SZX diagram are tagged with a non-negative integer representing
the size of the qubit register they carry, and other generators are marked in bold 
to represent a parallel application over each qubit in the register.
Bold spiders with multiplicity $k$ are tagged with $k$-sized vectors of phases
$\overline\alpha = \alpha_1 :: \dots :: \alpha_k$.
The natural extension of the ZX generators correspond to the following primitives:
\[
    \tikzfig{szx/elem/spiderZ-szx} : n_k \to m_k
    \qquad
    \tikzfig{szx/elem/spiderX-szx} : n_k \to m_k
    \qquad
    \tikzfig{szx/elem/hadamard-szx} : 1_k \to 1_k
    \qquad
    \tikzfig{szx/elem/ground-szx} : 1_k \to 0_0
\]
\[
    \tikzfig{szx/elem/wire-szx} : 1_k \to 1_k
    \qquad
    \tikzfig{szx/elem/cup-szx} : 0_0 \to 2_k
    \qquad
    \tikzfig{szx/elem/cap-szx} : 2_k \to 0_0
    \qquad
    \tikzfig{szx/elem/swap-szx} : 1_k \otimes 1_l \to 1_l \otimes 1_k
    \qquad
    \tikzfig{szx/elem/empty} : 0_k \to 0_k
\]
Wires of multiplicity zero are equivalent to the empty mapping.
We may omit writing the wire multiplicity if it can be deduced by context. 

The extension defines two additional generators; a \textit{split} node to split
registers into multiple wires, and a function arrow to apply arbitrary functions
over a register. In this work we restrict the arrow functions to permutations
$\sigma : [0 \dots k) \to [0 \dots k)$ that rearrange the order of the wires.
Using the split node and the wire primitives can derive the rotated version,
which we call a \textit{gather}.
\[
    \tikzfig{szx/elem/split} : 1_{n+m} \to 1_n \otimes 1_m
    \qquad
    \tikzfig{szx/elem/gather} : 1_n \otimes 1_m \to 1_{n+m}
    \qquad
    \tikzfig{szx/elem/arrow} : 1_k \to 1_k
\]

The rewriting rules of the calculus imply that a SZX diagrams can be considered
as an open graph where only the topology of its nodes and edges matters.
In the translation process we will make repeated use of the following
reductions rules to simplify the diagrams:
\[
    \tikzfig{szx/rules/split-gather}
    \ \stackrel{\mathbf{(sg)}}{=}\ %
    \tikzfig{szx/rules/split-gather-2}
    \qquad\qquad
    \tikzfig{szx/rules/gather-split}
    \ \stackrel{\mathbf{(gs)}}{=}\ %
    \tikzfig{szx/rules/gather-split-2}
\]
We may also depict composition of gathers as single multi-legged generators.
In an analogous manner, we will use a legless gather $\tikzfig{szx/gather0}$ to terminate wires with cardinality zero.
This could be encoded as the zero-multiplicity spider $\tikzfig{szx/spider0}$, which represents the empty mapping.

\reducedorlong{}{Refer to Appendix~\ref{sec:szx-extended} for a complete definition
of the rewriting rules and the interpretation of the SZX calculus.}
Cf.~\cite{carette_quantum_2021} for a description of the calculus including the generalized arrow generators.

Carette et al.~\cite{carette_quantum_2021} showed that the SZX calculus
can encode the repetition of a function $f : 1_n \to 1_n$ an arbitrary number of times $k\geq 1$ as follows:
\[
    \tikzfig{szx/repeat}
    \quad=\quad%
    \left(\tikzfig{szx/repeat-2}\right)^k
\]
where $f^k$ corresponds to $k$ parallel applications of $f$.
With a simple modification this construction can be used to encode an accumulating map operation.
\begin{lemma}%
    \label{lem:accumap}
    Let $g : 1_n \otimes 1_s \to 1_{m} \otimes 1_s$ and $k\geq 1$, then
\[
    \tikzfig{szx/accumap}
    \quad=\quad%
    \tikzfig{szx/accumap-2}
\]
\end{lemma}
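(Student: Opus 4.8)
The plan is to obtain the identity as a small modification of the iteration (``repeat'') construction of Carette et al.\ recalled immediately above. That construction turns a $k$-fold sequential composition of an endomorphism $f : 1_n \to 1_n$ into a single occurrence of $f$ sitting on a bold wire, wound around by a permutation arrow (a cyclic shift) together with a cup/cap pair. The key observation is that the winding only ever touches the wire that is threaded through the successive copies; any additional input or output ports of the repeated box are simply carried along unchanged. So to encode the accumulating map I would take $g : 1_n \otimes 1_s \to 1_m \otimes 1_s$, treat its $1_s$ port as the wire to be threaded — exactly as $f$'s single port is threaded in the repeat construction — and expose the $1_n$ input and the $1_m$ output as free ports of each unrolled copy, assembled through a split and a gather respectively.

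Concretely, first I would write down both sides for a fixed family index $k$. The right-hand side is the unrolled diagram: $k$ copies of $g$ chained through their $1_s$ ports (the $1_s$ output of copy $i$ feeding the $1_s$ input of copy $i{+}1$), with the $k$ free $1_n$ inputs assembled by a gather into a single $1_{nk}$ wire, the $k$ free $1_m$ outputs assembled into a $1_{mk}$ wire, the $1_s$ input of copy $1$ left as the accumulator input and the $1_s$ output of copy $k$ as the accumulator output. The left-hand side is the compact form with one occurrence of $g$, the threading performed on a bold $1_s$-wire by a permutation arrow and a cup/cap, plus a split of multiplicity $n$ feeding the free $1_n$ port from the external $1_{nk}$ register and a gather of multiplicity $m$ collecting the free $1_m$ port into the external $1_{mk}$ register.

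Then I would show the two diagrams are equal by the same argument that validates the repeat construction, using only that SZX diagrams are open graphs (so that only the topology of nodes and edges matters), together with the rewrite rules $\mathbf{(sg)}$ and $\mathbf{(gs)}$ and the permutation-arrow laws: distributing the split and the gather over the $k$ layers produced by the winding turns the compact left-hand side into the unrolled chain of $g$'s on the right. If a self-contained argument is wanted, the same conclusion follows by induction on $k$ — for $k=1$ the winding is empty, the permutation is the identity, and the one-legged split and gather cancel by $\mathbf{(sg)}$ and $\mathbf{(gs)}$, leaving just $g$; for the inductive step one peels a single copy of $g$ off the winding (sliding the outermost split and gather past the permutation arrow via $\mathbf{(sg)}$ and $\mathbf{(gs)}$) and applies the induction hypothesis to the remaining $k-1$ layers.

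The only real difficulty is bookkeeping: one must check that the permutation arrow and the cup/cap are set up so that the accumulator is threaded through the copies of $g$ in the intended order (copy $1$ acting first), and that the split of the external $1_{nk}$ wire and the gather into the external $1_{mk}$ wire are aligned with the winding so that the $i$-th fresh register is delivered to, and collected from, the $i$-th copy of $g$. This is a routine but slightly delicate verification of wire multiplicities and indices; no rewriting principle beyond $\mathbf{(sg)}$, $\mathbf{(gs)}$, the permutation-arrow laws, and the open-graph reading of SZX diagrams is required.
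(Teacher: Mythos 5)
Your proposal is correct and, in its self-contained form, follows essentially the same route as the paper: an induction on $k$ whose base case $k=1$ reduces the compact diagram to a single copy of $g$ (the paper does this via $\bm{(i1)}$ and the zero-multiplicity rules $\bm{(\emptyset 1)}$, $\bm{(\emptyset 4)}$, since the cup/cap then carry wires of cardinality zero) and whose inductive step peels off one copy of $g$ using the definition and the $\mathbf{(gs)}$ rule before invoking the induction hypothesis. The only cosmetic difference is that you attribute the base-case cancellation to $\mathbf{(sg)}$/$\mathbf{(gs)}$ where the paper leans on the zero-wire elimination rules, but this is bookkeeping rather than a gap.
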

As an example, given a list $N = [n_1, n_2, n_3]$ and a starting accumulator value $x_0$,
this construction would produce the mapping $([n_1, n_2, n_3], x_0) \mapsto ([m_1, m_2, m_3], x_3)$ where
$(m_i, x_i) = g(n_i, x_{i-1})$ for $i \in [1,3]$.

% ------------------------------------------------------------------------------- %

\subsection{SZX diagram families and list instantiation}

We introduce the definition of a family of SZX diagrams $D: \N^k \to \diag$
as a function from $k$ integer \textit{parameters} to SZX diagrams.
We require the structure of the diagrams to be the same for all elements in the family,
parameters may only alter the wire tags and spider phases.
Partial application is allowed, we write $D(n)$ to fix the first parameter of $D$.

%\begin{example}
%    The following example describes a family of diagrams $D$ that applies z-rotations
%    with angle $\sfrac{\pi}{n}$ on $n+1$ qubits.
%    \[D := n \mapsto \tikzfig{szx/family-example}\]
%\end{example}

Since instantiations of a family share the same structure,
we can compose them in parallel by merging the different values of wire tags and spider phases.
We introduce a shorthand for instantiating a family of diagrams on multiple values
and combining the resulting diagrams in parallel.
This definition is strictly more general than the \textit{thickening endofunctor}
presented by Carette et al.~\cite{carette_quantum_2021},
which replicates a concrete diagram in parallel.
A \textit{list instantiation} of a family of diagrams $D: \N^{k+1} \to \diag$ over
a list $N$ of integers is written as $(D(n), n \in N)$.
This results in a family with one fewer parameter, $(D(n), n \in N): \N^k \to \diag$. 
We graphically depict a list instantiation as a dashed box in a diagram, as follows.
\[\tikzfig{szx/list-instantiation-box} := \tikzfig{szx/list-instantiation}\]

The definition of the list instantiation operator is given recursively
on the construction of $D$ in Figure~\ref{fig:list-instantiation}.
On the diagram wires we use $v(N)$ to denote the wire cardinality $\sum_{n \in N} v(n)$,
$\overrightarrow\alpha(N)$ for the concatenation of phase vectors
$\overrightarrow\alpha(n_1) :: \dots :: \overrightarrow\alpha(n_m)$,
and $\sigma(N)$ for the composition of permutations $\bigotimes_{n \in N} \sigma(n)$.
In general, a permutation arrow $\sigma(N,v,w)$ instantiated in concrete values
can be replaced by a reordering of wires between two gather gates%
\reducedorlong{}{ using the rewrite rule $\bf{(p)}$}.

\begin{figure}[tb]
\begin{mdframed}
    Given $D: \N^{k+1} \to \diag$, $N = [n_1, \dots, n_m] \in \N^m$,
    \[
        ((D_1 \otimes D_2)(n), n \in N) := (D_1(n), n \in N) \otimes (D_2(n), n \in N) 
        \qquad
        \tikzfig{szx/list-instantiation/wire}
        :=
        \tikzfig{szx/list-instantiation/wire-flat}
    \]
    \[
        ((D_2 \circ D_1)(n), n \in N) := (D_2(n), n \in N) \circ (D_1(n), n \in N) 
        \qquad
        \tikzfig{szx/list-instantiation/ground}
        :=
        \tikzfig{szx/list-instantiation/ground-flat}
    \]
    \[
        \tikzfig{szx/list-instantiation/hadam}
        :=
        \tikzfig{szx/list-instantiation/hadam-flat}
        \qquad
        \tikzfig{szx/list-instantiation/arrow}
        :=
        \tikzfig{szx/list-instantiation/arrow-flat}
    \]
    \[
        \tikzfig{szx/list-instantiation/spider-Z}
        :=
        \tikzfig{szx/list-instantiation/spider-Z-flat}
        \qquad
        \tikzfig{szx/list-instantiation/spider-X}
        :=
        \tikzfig{szx/list-instantiation/spider-X-flat}
    \]
    \[
        \tikzfig{szx/list-instantiation/gather}
        :=
        \tikzfig{szx/list-instantiation/gather-flat}
    \]
    Where $\sigma(N, v, w) \in \Ftwo^{v(N)+w(N) \times v(N)+w(N)}$ is the permutation defined as the matrix
    \[
        \sigma(N, v, w) = \begin{pmatrix}\sigma_f^N \vert \sigma_g^N \end{pmatrix}, \quad
        \sigma_f^{N} \in \Ftwo^{v(N)+w(N) \times v(N)},
        \ \sigma_g^{N} \in \Ftwo^{v(N)+w(N) \times w(N)}
    \]
    \[
    \begin{array}{llll}
        \sigma_f^{[\,]} = Id_0
        \quad &
        \sigma_f^{n::N'} =
            \begin{pmatrix}
                Id_{v(n)} & 0 \\
                0 & 0 \\
                0 & \sigma_f^{N'}
            \end{pmatrix}
        \quad &
        \sigma_g^{[\,]} = Id_0
        \quad &
        \sigma_g^{n::N'} =
            \begin{pmatrix}
                0 & 0 \\
                Id_{w(n)} & 0 \\
                0 & \sigma_g^{N'}
            \end{pmatrix}
    \end{array}
    \]
    \caption{Definition of the list instantiation operator.}%
    \label{fig:list-instantiation}
\end{mdframed}
\end{figure}

\begin{lemma}%
    \label{lem:list-instantiation}
    For any diagram family $D$, $n_0 : \N$, $N : \N^k$,
    \[
        \tikzfig{szx/list-instantiation-append}
        \;=\;
        \tikzfig{szx/list-instantiation-append-split}
    \]
\end{lemma}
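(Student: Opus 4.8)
The plan is to prove the statement by structural induction on the construction of the diagram family $D$, mirroring the recursive definition of the list instantiation operator given in Figure~\ref{fig:list-instantiation}. The claim $(D(n), n \in n_0 :: N) = (D(n), n \in [n_0]) \otimes (D(n), n \in N)$ (read up to the gather/split bookkeeping shown in the target figure) is exactly the statement that list instantiation distributes over the ``cons'' operation on the parameter list, so it suffices to check it on each generator and each composition operation. For the $\otimes$ and $\circ$ cases the result is immediate from the corresponding clauses in Figure~\ref{fig:list-instantiation}: list instantiation is defined to commute with both, so one just reassociates and applies the inductive hypothesis to the subdiagrams. For the atomic generators — wire, ground, Hadamard, arrow, Z-spider, X-spider, gather — the clause for $n_0 :: N$ already exhibits the instantiation as the instantiation for $[n_0]$ placed in parallel with the instantiation for $N$, modulo the interposed permutation arrows $\sigma(N,v,w)$.

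First I would fix notation for the "$\otimes$ with reshuffling'' that the right-hand figure depicts: instantiating over $n_0 :: N$ produces wires of cardinality $v(n_0) + v(N)$, whereas the parallel composite of the two separate instantiations produces, on each leg, a pair of wires $v(n_0)$ and $v(N)$ that must be gathered; the content of Lemma~\ref{lem:list-instantiation} is precisely that these agree once the appropriate gather/split pair and the permutation $\sigma([n_0],v,w) \otimes \sigma(N,v,w)$ versus $\sigma(n_0::N,v,w)$ are accounted for. So the real work in each base case is a short computation with the block-matrix recursion for $\sigma_f$ and $\sigma_g$: one checks that $\sigma_f^{n_0::N}$ decomposes as the block-diagonal-up-to-interleaving of $\sigma_f^{[n_0]} = \begin{pmatrix} Id_{v(n_0)} \\ 0 \end{pmatrix}$ and $\sigma_f^{N}$, and similarly for $\sigma_g$, which is exactly the $m=1$ versus general-$m$ consistency of the definition. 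Then I would invoke the reduction rules $\mathbf{(sg)}$ and $\mathbf{(gs)}$ (and, in the long version, the rewrite $\bf{(p)}$ that turns a permutation arrow between gathers into a wire reordering) to absorb the extra gather/split pairs and realise the diagram equality on the nose.

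The step I expect to be the main obstacle is the spider case together with the permutation bookkeeping. For a Z- or X-spider the instantiation over $n_0 :: N$ carries the concatenated phase vector $\overrightarrow\alpha(n_0) :: \overrightarrow\alpha(N)$ on a single register of cardinality $v(n_0)+v(N)$, while the parallel composite carries two spiders with phase vectors $\overrightarrow\alpha(n_0)$ and $\overrightarrow\alpha(N)$ on registers that then need to be merged on the input and output sides; showing these are equal requires both the spider fusion rule of the SZX calculus and a verification that the permutation $\sigma(n_0::N,v,w)$ routes the $v(n_0)$ "head'' wires and the $v(N)$ "tail'' wires to exactly the legs where the two sub-instantiations expect them. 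I would handle this by peeling off the head: apply Lemma~\ref{lem:list-instantiation}-style reasoning once to split $n_0$ from $N$ at the level of gathers, fuse the two spiders using $\mathbf{(sg)}$/$\mathbf{(gs)}$, and then check the block-matrix identity $\sigma_f^{n_0::N} = (\text{interleave of } \sigma_f^{[n_0]} \text{ and } \sigma_f^{N})$ term by term from the recursive definition. The remaining generators (wire, ground, Hadamard, arrow, legless gather) are then either trivial or strictly simpler instances of the same pattern, so once the spider case is settled the induction closes.
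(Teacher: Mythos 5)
Your proposal matches the paper's proof: both proceed by induction on the construction of $D$, dispatch the $\otimes$ and $\circ$ cases by the defining clauses of Figure~\ref{fig:list-instantiation}, and reduce the generator cases to the behaviour of the permutation $\sigma$ together with the rules $\mathbf{(sg)}$, $\mathbf{(gs)}$ and $\mathbf{(p)}$. The only divergence is one of emphasis: the paper's single nontrivial case is the gather (the only clause where $\sigma(N,v,w)$ actually appears), and the spider cases you flag as the main obstacle are discharged not by spider fusion but by the gather-commutation rules $(z1)$--$(z3)$, $(w)$, $(p4)$ from the appendix --- though the block-matrix verification of $\sigma_f^{n_0::N}$ and $\sigma_g^{n_0::N}$ that you describe is exactly the computation the gather case requires.
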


\begin{lemma}%
    \label{lem:list-init-zero}
    A diagram family initialized with the empty list corresponds to the empty map.
    For any diagram family $D$,
    \[
        \tikzfig{szx/list-instantiation-empty}
        \;=\;
        \tikzfig{szx/list-instantiation-empty-flat}
    \]
\end{lemma}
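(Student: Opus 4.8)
The plan is to proceed by structural induction on the construction of the diagram family $D$, mirroring the recursive definition of the list instantiation operator given in Figure~\ref{fig:list-instantiation}. The claim is that $(D(n), n \in [\,])$ is the empty map; since a diagram family is built up from the SZX generators using parallel and serial composition, it suffices to check that (i) each generator, when instantiated over the empty list, yields the empty map, and (ii) the property is preserved under the two composition operations.

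First I would handle the generators. For a wire, a Hadamard, a ground, and an arrow (permutation), instantiating over $N = [\,]$ produces a wire of cardinality $v([\,]) = \sum_{n \in [\,]} v(n) = 0$, i.e. the zero-multiplicity wire, which by the remarks in Section~\ref{sec:szx} is the empty mapping; the ground over the empty list is likewise trivial since a $0$-register ground already has type $1_0 \to 0_0$. For the Z- and X-spiders, the phase vector $\overrightarrow\alpha([\,])$ is the empty concatenation, so the instantiated spider has all legs of multiplicity zero and is the empty map (equivalently the $0$-multiplicity spider $\tikzfig{szx/spider0}$). For the gather, $v([\,]) = w([\,]) = 0$ so both inputs and the output are $0$-wires, and the permutation $\sigma([\,], v, w) = (\sigma_f^{[\,]} \mid \sigma_g^{[\,]}) = (Id_0 \mid Id_0) = Id_0$ is the empty permutation, so again we get the empty map. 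Each of these is an immediate unfolding of the corresponding line of Figure~\ref{fig:list-instantiation} together with the fact that cardinality-zero wires are the empty mapping.

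For the inductive step, parallel composition gives $((D_1 \otimes D_2)(n), n \in [\,]) = (D_1(n), n \in [\,]) \otimes (D_2(n), n \in [\,])$, which by the induction hypotheses is the tensor of two empty maps, hence the empty map; similarly serial composition $((D_2 \circ D_1)(n), n \in [\,]) = (D_2(n), n \in [\,]) \circ (D_1(n), n \in [\,])$ is the composite of two empty maps, hence empty. This closes the induction.

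I do not expect a genuine obstacle here: the lemma is essentially a base-case bookkeeping statement, and the only thing to be careful about is that the recursive definition in Figure~\ref{fig:list-instantiation} really does bottom out on the generators and the two composition clauses (so that no case is missed), and that each generator's empty-list instantiation is literally a diagram all of whose wires carry multiplicity $0$. The mildly delicate point is the gather/arrow case, where one must observe that the permutation matrix $\sigma(N,v,w)$ degenerates to $Id_0$ when $N = [\,]$ rather than to some nontrivial reshuffling; this follows directly from the stated recursions $\sigma_f^{[\,]} = Id_0$, $\sigma_g^{[\,]} = Id_0$. Alternatively, one could derive the lemma from Lemma~\ref{lem:list-instantiation} by a degenerate-case argument, but the direct structural induction is cleaner and self-contained.
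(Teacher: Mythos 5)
Your proposal is correct and matches the paper's argument: both rest on the single observation that every wire in the empty-list instantiation has cardinality zero, after which all internal nodes can be eliminated. The paper states this more tersely by invoking the zero-cardinality rewrite rules $(\bm\emptyset 1)$--$(\bm\emptyset 4)$ and $(\bm\emptyset w)$ rather than running an explicit structural induction over the generators, but the content is the same; if you want your version to be fully formal within the rewrite system, cite those rules where you assert that each zero-multiplicity generator ``is the empty map.''
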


\begin{lemma}%
    \label{lem:list-instantiation-linear}
    The list instantiation procedure on an $n$-node diagram family adds
    $\bigo(n)$ nodes to the original diagram.
\end{lemma}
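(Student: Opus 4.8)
The plan is to proceed by structural induction on the construction of the diagram family $D$, mirroring the recursive definition of the list instantiation operator given in Figure~\ref{fig:list-instantiation}. The claim to be strengthened slightly for the induction to go through: I will track not just that the instantiated diagram $(D(n), n \in N)$ has $\bigo(n_D)$ nodes where $n_D$ is the node count of $D$, but that it has at most $c \cdot n_D$ nodes for a fixed universal constant $c$ independent of $N$ and $D$. This uniformity is what makes the composite cases work.

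First I would handle the base-case generators — wire, ground, Hadamard, arrow, Z-spider, X-spider, and gather. Inspecting each clause of Figure~\ref{fig:list-instantiation}, a single-generator family instantiates to a diagram consisting of a bounded number of nodes: the generator itself (now with merged wire tags $v(N)$, concatenated phase vectors $\overrightarrow\alpha(N)$, or composed permutation $\sigma(N)$), together with at most two gather gates and one permutation arrow $\sigma(N, v, w)$ used to reorganize the wires. Crucially, although the \emph{parameters} carried by these nodes (the integer tag, the length of the phase vector, the size of the permutation matrix) grow with $|N|$, the \emph{number of nodes} does not — it is bounded by a small constant, say $4$. So for a one-node family the instantiated diagram has at most $c$ nodes, which establishes the base case provided $c \geq 4$.

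For the inductive step I would treat the two composition clauses. For parallel composition, $((D_1 \otimes D_2)(n), n \in N) := (D_1(n), n\in N) \otimes (D_2(n), n \in N)$, so the node count is the sum of the two instantiated node counts, which by the induction hypothesis is at most $c\,n_{D_1} + c\,n_{D_2} = c(n_{D_1}+n_{D_2}) = c\,n_{D_1 \otimes D_2}$; the case of serial composition $\circ$ is identical. Since the node count of $D$ is exactly the sum of node counts over its generators, unwinding the induction gives $|(D(n), n\in N)| \leq c\,n_D = \bigo(n_D)$, as required.

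The main obstacle is purely a bookkeeping one: making sure that the gather/arrow overhead introduced at every generator is genuinely \emph{constant} and does not, for instance, secretly depend on $|N|$ through some nested application of the definition. This requires care with the permutation-arrow clause, since $\sigma(N,v,w)$ is defined recursively over $N$; the point to emphasize is that this recursion determines the matrix \emph{entries} of a single arrow node, not a chain of arrow nodes, so it contributes one node regardless of $|N|$ (and by the remark following Figure~\ref{fig:list-instantiation}, it can in any case be realized as a fixed reordering of wires between two gathers). Once that is pinned down, the counting is routine. I would also note in passing that the construction therefore preserves the asymptotic size guarantees needed downstream, so that translating a Proto-Quipper-D program of size $s$ still yields a diagram family of size $\bigo(s)$ after list instantiations are applied.
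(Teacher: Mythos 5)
Your proposal is correct and follows essentially the same route as the paper: structural induction on the construction of the diagram family, observing that each generator clause of Figure~\ref{fig:list-instantiation} contributes only a constant number of nodes (with the permutation $\sigma(N,v,w)$ being a single arrow node whose \emph{entries}, not node count, depend on $N$), and that the $\otimes$ and $\circ$ clauses are additive. The paper's proof is merely a sharper instance of your accounting — it notes that every clause except the gather adds zero nodes and the gather adds exactly one, so the overhead equals the number of gathers — but your uniform constant $c$ per generator yields the same $\bigo(n)$ bound.
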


\section{The \texorpdfstring{$\lambdaD$}{lambda sub D} calculus}%
\label{sec:fragment}

We first define a base language from which to build our translation. In this section we present the calculus $\lambdaD$, as a subset of the strongly normalizing Proto-Quipper-D programs. Terms are inductively defined by:
\begin{align*}
    M, N, L :=\; & x \;|\; C \;|\; \R \;|\; \; \U \;|\; 0 \;|\; 1 \;|\; n \;|\; \meas \;|\; \new \;|
            \lambda x^S. M \;|\; M \ N \;|\; \lambda' x^P. M \;|\; M\ @\ N \;|\;\\
        & \star \;|\; M \otimes N \;|\; \Qlet{x^{S_1} \otimes y^{S_2}}{M}{N} \;|\;M;N \;|\; \\
        & \vnil^A \;|\; M :: N \;|\; \Qlet{x^S :: y^{\vec S n}}{M}{N} \\
        & M \square N \;|\; \ifz{L}{M}{N} \;|\; \Qfor{k^P}{M}{N}
\end{align*}

Where $C$ is a set of implicit bounded recursive primitives used for operating
with vectors and iterating functions. $n\in\mathbb{N}$, $\square\in\{+, -,
\times, / , \wedge\}$ and $\ifz{L}{M}{N}$ is the conditional that tests for
zero.

Here $\U$ denotes a set of unitary operations and $\R$ is a phase shift gate
with a parametrized angle. In this article we fix the former to the CNOT and
Hadamard (H) gates, and the latter to the arbitrary rotation gates
$R_{z(\alpha)}$ and $R_{x(\alpha)}$.

For the remaining constants, $0$ and $1$ represent bits, $\new$ is used to
create a qubit, and $\meas$ to measure it. $\star$ is the inhabitant of the
$\unit$ type, and the sequence $M;N$ is used to discard it. Qubits can be
combined via the tensor product $M\otimes N$ with $\Qlet{x^{S_1} \otimes
y^{S_2}}{M}{N}$ as its corresponding destructor.
 
The system supports lists; $\vnil^A$ represents the empty list, $M::N$ the
constructor and $\Qlet{x^S :: y^{\vec S n}}{M}{N}$ acts as the destructor. 
Finally, the term $\Qfor{k^P}{M}{N}$ allows iterating over parameter lists.

The typing system is defined in Figure~\ref {fig:linear-fragment-typing}. We
write $|\Phi|$ for the list of variables in a typing context $\Phi$. The type
$\vec{A}{n}$ represents a vector of known length $n$ of elements of type $A$.

We differentiate between \textit{state contexts} (Noted with $\Gamma$ and
$\Delta$) and \textit{parameter contexts} (Noted with $\Phi$). For our case of
study, parameter contexts consist only of pairs $x:\nat$ or
$x:\vec{\nat}{(n:\nat)}$, since they are the only non-linear types of variables
that we manage. Every other variable falls under the state context. The terms
$\lambda x^S. M$ and $MN$ correspond to the abstraction and application which
will be used for state-typed terms. The analogous constructions for
parameter-typed terms are $\lambda' x^P. M$ and $M@N$.

In this sense we deviate from the original Proto-Quipper-D type system, which
supports a single context decorated with indices. Instead, we use a linear and
non-linear approach similar to the work of Cervesato and
Pfenning\cite{Cervesato1996ALL}.

A key difference between Quipper (and, by extension, Proto-Quipper-D) and
$\lambdaD$ is the approach to defining circuits. In Quipper, circuits are an
intrinsic part of the language and can be operated upon. In our case, the
translation into SZX diagrams will be mediated with a function defined outside
the language.

\begin{figure}[hpt]
\begin{mdframed}

    Types:
\(
    A := S \;|\; P \;|\; (n: \nat) \to A[n]
\)

State types:
\(
    S := \bit \;|\; \qubit \;|\;
        \unit \;|\; S_1 \otimes S_2 \;|\; S_1 \multimap S_2 \;|\; \vec{S}{(n: \nat)}
\)

Parameter types:
\(
    P := \nat \;|\; \vec{\nat}{(n: \nat)}
\)

State contexts:
\(
    \Gamma,\Delta := \cdot \;|\; x : S, \Gamma
\)

Parameter contexts:
\(
    \Phi := \cdot \;|\; x : P, \Phi
\)

    \[
        \infer[\mathsf{ax}]{\Phi, x:A\vdash x:A}{} \qquad
        \infer[\mathsf{ax_0}]{\Phi \vdash 0:\bit}{} \qquad
        \infer[\mathsf{ax_1}]{\Phi \vdash 1:\bit}{} \qquad
    \]
    \[
        \infer[\mathsf{ax}_n]{\Phi\vdash n:\nat}{n\in\mathbb{N}}\qquad
        \infer[\square]{\Phi\vdash M\square N:\nat}
        {\Phi\vdash M:\nat & \Phi\vdash N:\nat}
    \]
    \[
        \infer[\mathsf{meas}]{\Phi \vdash \meas:\qubit\multimap\bit}{} \qquad
        \infer[\mathsf{new}]{\Phi \vdash \new:\bit\multimap\qubit}{} \qquad
        \infer[\mathsf{ax_\unit}]{\Phi\vdash \star: \unit}{}
    \]
    \[
        \infer[\mathsf{u}]{\Phi\vdash \U:\qubit^{\otimes n}\multimap\qubit^{\otimes n}}
        {}
        \qquad
        \infer[\mathsf{r}]{\Phi\vdash \R:(n:\nat)\to\qubit^{\otimes n}\multimap\qubit^{\otimes n}}
        {}
    \]
    \[
        \infer[\multimap_i]{\Phi,\Gamma\vdash\lambda x.M:A\multimap B}{\Phi,\Gamma,x:A\vdash M:B}
        \qquad
        \infer[\rightarrow_i]{\Phi,\Gamma\vdash\lambda' x.M:(n:\nat)\to B}{\Phi,x:\nat,\Gamma\vdash M:B[x]}
    \]
    \[
        \infer[\multimap_e]{\Phi,\Gamma,\Delta\vdash MN:B}{\Phi,\Gamma\vdash M:A\multimap B & \Phi,\Delta\vdash N:A}
        \qquad
        \infer[\rightarrow_e]{\Phi,\Gamma\vdash M @ N:B[n/N]}{\Phi,\Gamma\vdash M:(n:\nat)\to B & \Phi\vdash N:\nat}
    \]
    \[
        \infer[;]{\Phi,\Gamma,\Delta\vdash M;N :B}{\Phi,\Gamma\vdash M:\unit & \Phi,\Delta\vdash N:B}
        \qquad
        \infer[;_{vec}]{\Phi,\Gamma,\Delta\vdash M;_vN :B}{\Phi,\Gamma\vdash M:\vec{A}{0} & \Phi,\Delta\vdash N:B}
    \]
    \[
        \infer[\otimes]{\Phi,\Gamma,\Delta\vdash M\otimes N:A \otimes B}{\Phi,\Gamma\vdash M:A & \Phi,\Delta\vdash N:B}
        \qquad
        \infer[let_\otimes]{\Phi,\Gamma,\Delta\vdash \Qlet{x^A \otimes y^B}{M}{N}:C}{\Phi,\Gamma\vdash M:A\otimes B & \Phi,\Delta,x:A,y:B\vdash N: C}
    \]
    \[
        \infer[\vnil]{\Phi\vdash \vnil^A:\vec A 0}{}
        \qquad
        \infer[\vec{}{}]{\Phi,\Gamma,\Delta\vdash M\!::\!N\ :\vec{A}{(n+1)}}{\Phi,\Gamma\vdash M:A & \Phi,\Delta\vdash N: \vec A n}
    \]
    \[
        \infer[let_{vec}]{\Phi,\Gamma,\Delta\vdash \Qlet{x^A : y^{\vec A n}}{M}{N}:C}{\Phi,\Gamma\vdash M:\vec{A}{(n+1)} & \Phi,\Delta,x:A,y:\vec A n\vdash N: C}
    \]
    \[
        \infer[for]{\Phi, \Gamma^n \vdash \Qfor{k}{V}{M} : \vec {A[k]} n}{n:\nat & \Phi\vdash V:\vec \nat n & k:\nat,\Phi,\Gamma \vdash M:A[k]}
    \]
    \[
        \infer[ifz]{\Phi, \Gamma \vdash \ifz{L}{M}{N} : A}{\Phi \vdash L:\nat & \Phi,\Gamma\vdash M:A & \Phi,\Gamma\vdash N: A}
    \]
    \caption{Type system.}%
    \label{fig:linear-fragment-typing}
\end{mdframed}
\end{figure}

Types are divided into two kinds; parameter and state types. Both of these can
depend on terms of type $\nat$. For the scope of this work, this dependence may
only influence the size of vectors types.

Parameter types represent non-linear variable types which are known at the time
of generation of the concrete quantum operations. In the translation into SZX
diagrams, these variables may dictate the labels of the wires and spiders.
Vectors of $\nat$ terms represent their cartesian product.
On the other hand, state types correspond to the quantum operations and states
to be computed. In the translation, these terms inform the shape and composition
of the diagrams. Vectors of state type terms represent their tensor product.

In lieu of unbounded and implicit recursion, we define a series of primitive
functions for performing explicit vector manipulation. These primitives can be
defined in the original language, with the advantage of them being strongly
normalizing. The first four primitives are used to manage state vectors, while
the last one is used for generating parameters. For ease of translation some
terms are decorated with type annotations, however we will omit these for
clarity when the type is apparent.

\begin{center}
\begin{tabular}{l}
    $\Phi\vdash \Qaccumap_{A,B,C} : (n:\nat) \to \vec A n \multimap\; \vec{(A \multimap C \multimap B \otimes C)}{n} \multimap C \multimap (\vec B n) \otimes C $\\
    $\Phi\vdash \Qsplit_A : (n:\nat) \to (m:\nat) \to \vec{A}{(n+m)} \multimap \vec A n \otimes \vec A m $\\
    $\Phi\vdash \Qappend_A : (n:\nat) \to (m:\nat) \to \vec{A}{n} \multimap \vec A m \multimap \vec{A}{(n+m)} $\\
    $\Phi\vdash \Qdrop : (n:\nat) \to \vec{\unit}{n} \multimap\; \unit $\\
    $\Phi\vdash \Qrange : (n:\nat)\to (m:\nat)\to \vec{\nat}{(m-n)}$ \\
\end{tabular}
\end{center}

Since every diagram represents a linear map between qubits there is no
representation equivalent to non-terminating terms, even for weakly normalizing
programs. This is the main reason behind the design choice of the primitives
set.
\reducedorlong{Cf. the long version of this paper on ArXiV for the operational
semantics of the calculus and primitives, as well as the encoding of the
primitives as Proto-Quipper-D functions.}{We include the operational semantics
of the calculus and primitives in Appendix~\ref{sec:op-semantics}. The encoding
of the primitives as Proto-Quipper-D functions is shown in
Appendix~\ref{sec:primitive-trans}. }

We additionally define the following helpful terms based on the previous
primitives to aid in the manipulation of vectors.
\reducedorlong{}{Cf. Appendix~\ref{sec:op-semantics} for their definition as \lambdaD-terms.}
\begin{center}
\begin{tabular}{l}
    $\Phi\vdash \Qmap_{A,B} : (n:\nat) \to \vec A n \multimap\; \vec{(A \multimap B)}{n} \multimap \vec B n $\\
    $\Phi\vdash \Qfold_{A,C} : (n:\nat) \to \vec A n \multimap\; \vec{(A \multimap C \multimap C)}{n} \multimap C \multimap C $\\
    $\Phi\vdash \Qcompose_{A} : (n:\nat) \to \vec{(A \multimap A)}{n} \multimap A \multimap A $\\
\end{tabular}
\end{center}

The distinction between primitives that deal with state and parameters highlights the inclusion of the $\mathtt{for}$ as a construction into the language instead of a primitive. Since it acts over both parameter and state types, its function is effectively to bridge the gap between the two of them. This operation closely corresponds to the list instantiation procedure presented in the Section~\ref{sec:szx}.

For example, if we take $ns$ to be a vector of natural numbers, and $xs$ a vector of abstractions $R @k (new 0)$. The term $\Qfor{k}{ns}{xs}$ generates a vector of quantum maps by instantiating the abstractions for each individual parameter in $ns$.

\section{Encoding programs as diagram families}%
\label{sec:translation}

In this section we introduce an encoding of the lambda calculus presented in Section~\ref{sec:fragment}
into families of SZX diagrams with context variables as inputs and term values as outputs.
We split the lambda-terms into those that represent linear mappings between quantum states
and can be encoded as families of SZX diagrams, and parameter terms that can be completely evaluated
at compile-time.

\subsection{Parameter evaluation}

We say a type is \textit{evaluable} if it has the form
$A = (n_1:\nat) \to \dots \to (n_k:\nat) \to P[n_1,\dots,n_k]$ with $P$ a parameter type.
Since $A$ does not encode a quantum operation,
we interpret it directly into functions over vectors of natural numbers.
The translation of an evaluable type, $\eval{A}$, is defined recursively as follows:
\[
    \eval{(n : \nat) \to B[n]} = \N \to \bigcup_{n \in \N} \eval{B[n]}
    \qquad
    \eval{\nat} = \N
    \qquad
    \eval{\vec{\nat}{(n : \nat)}} = \N^n
\]

Given a type judgement $\Phi \vdash L : P$ where $P$ is an evaluable type,
we define $\eval{L}_\Phi$ as the evaluation of the term into a function from parameters into products of natural numbers.
Since the typing is syntax directed, the evaluation is defined directly over the terms as follows:
\[
    \eval{x}_{x:\nat, \Phi} = x,|\Phi| \mapsto x
    \qquad
    \eval{n}_\Phi = |\Phi| \mapsto n
    \qquad 
    \eval{M \square N}_\Phi = |\Phi| \mapsto \eval{M}_\Phi(|\Phi|) \square \eval{N}_\Phi(|\Phi|)
\]
\[
    \eval{M\ ::\ N}_{\Phi} = |\Phi| \mapsto \eval{M}_\Phi(|\Phi|) \times \eval{N}_\Phi(|\Phi|)
    \qquad 
    \eval{\vnil^\nat}_{\Phi} = |\Phi| \mapsto []
\]
\[
    \eval{\lambda' x^P.M}_\Phi = x,|\Phi| \mapsto \eval{M}_\Phi(x,|\Phi|)
    \qquad
    \eval{M @ N}_\Phi = \eval{M}_\Phi(\eval{N}_\Phi(|\Phi|), \Phi)
\]
\[
    \eval{\ifz{L}{M}{N}}_\Phi = |\Phi| \mapsto \begin{cases}
        \eval{M}_\Phi(|\Phi|) \text{ if } \eval{L}_\Phi(|\Phi|) = 0 \\
        \eval{N}_\Phi(|\Phi|) \text{ otherwise}
    \end{cases}
    \quad
    \eval{\Qrange}_\Phi = n,m,|\Phi| \mapsto \bigtimes_{i=n}^{m-1} i
\]
\[
    \eval{\Qfor{k}{V}{M}}_\Phi = |\Phi| \mapsto \bigtimes_{k\in \eval{V}_\Phi(|\Phi|)} \eval{M}_{k:\nat\Phi}(k, |\Phi|)
\]
\[
    \eval{\Qlet{x^P :: y^{\vec{P}{n}}}{M}{N}}_\Phi = |\Phi|\mapsto 
        \eval{N}_{x:P, y:\vec{P}{n}, \Phi}(y_1,[y_2,\dots,y_n],|\Phi|)
    {\scriptstyle{\text{ where } [y_1,\dots,y_n] = \eval{M}_\Phi(|\Phi|)}}
\]

\begin{lemma}%
    \label{lem:eval-type}
    Given an evaluable type $A$ and a type judgement $\Phi \vdash L : A$,
    $\eval{L}_\Phi \in \bigtimes_{x:P \in \Phi} \eval{P} \to \eval{A}$.
\end{lemma}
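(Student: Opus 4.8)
The plan is to prove Lemma~\ref{lem:eval-type} by structural induction on the typing derivation $\Phi \vdash L : A$, exploiting the fact (stated in the excerpt) that the typing is syntax-directed, so each term form corresponds to a unique typing rule and the definition of $\eval{\cdot}$ branches accordingly. The statement to establish is that $\eval{L}_\Phi$ is a well-defined function in $\bigtimes_{x:P \in \Phi} \eval{P} \to \eval{A}$; since $\eval{\cdot}$ is defined in a syntax-directed manner matching the same rules, each case amounts to checking that the recursive definition produces something of the right type given that the subterms' translations do (the induction hypotheses). A preliminary observation I would record first: if $\Phi \vdash L : A$ with $A$ evaluable, then every free variable of $L$ must have a parameter type $P$ — the state context is empty — because the term-formation rules for evaluable conclusions ($\mathsf{ax}_n$, $\square$, $\rightarrow_i$, $\rightarrow_e$, $\vnil^\nat$, $::$ at $\nat$, $\mathsf{ifz}$, $\Qfor$, $\Qrange$, and the parameter-typed $\mathtt{let}$) never introduce or consume state-typed variables; this justifies writing $\eval{L}_\Phi$ as a function of $|\Phi|$ alone.

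**Next I would** go through the cases. The leaves are immediate: $\eval{x}_{x:\nat,\Phi} = x,|\Phi|\mapsto x \in \N = \eval{\nat}$, and $\eval{n}_\Phi$, $\eval{\vnil^\nat}_\Phi$ are constant functions into $\N$ and $\N^0$ respectively. For $\square$: by IH, $\eval{M}_\Phi$ and $\eval{N}_\Phi$ are functions into $\N$, and $\square \in \{+,-,\times,/,\wedge\}$ is a (total, possibly truncated/rounded) binary operation on $\N$, so the pointwise combination lands in $\N = \eval{\nat}$. For $::$ at type $\vec{\nat}{(n+1)}$: IH gives $\eval{M}_\Phi : \cdots \to \N$ and $\eval{N}_\Phi : \cdots \to \N^n$, so the pointwise pair is in $\N \times \N^n = \N^{n+1} = \eval{\vec{\nat}{(n+1)}}$. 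For the parameter abstraction $\lambda' x^P.M$ with $\Phi,x:\nat,\Gamma \vdash M : B[x]$ (here $\Gamma$ is empty by the preliminary observation), IH gives $\eval{M}_{x:\nat,\Phi}: \eval{\nat}\times\bigtimes_{y:P\in\Phi}\eval{P} \to \eval{B[x]}$, hence currying yields a function into $\N \to \bigcup_n \eval{B[n]} = \eval{(n:\nat)\to B[n]}$. For application $M@N$: IH on $M$ gives $\eval{M}_\Phi(\cdot) \in \N \to \bigcup_n\eval{B[n]}$ and on $N$ gives $\eval{N}_\Phi(|\Phi|)\in\N$; feeding the latter as the first argument lands in $\eval{B[n/N]}$, matching the conclusion type. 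The cases $\mathtt{ifz}$, $\Qrange$, the parameter $\mathtt{let}$, and $\Qfor$ are similar bookkeeping: for $\mathtt{ifz}$ both branches have the same type $A$ by the rule, so the case split is well-typed; for $\Qfor{k}{V}{M}$, IH on $V$ gives a list in $\N^n$ and IH on $M$ (with $k:\nat$ added) gives for each entry a value in $\eval{A[k]}$, and the indexed product $\bigtimes_{k\in\eval{V}_\Phi(|\Phi|)}\eval{M}_{k:\nat,\Phi}(k,|\Phi|)$ has the right length $n$, matching $\eval{\vec{A[k]}{n}}$; for the parameter $\mathtt{let}$, destructuring the IH-supplied list of length $n+1$ into head and tail and feeding them to $\eval{N}$ with the extended context lands in $\eval{C}$.

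**The main obstacle** I anticipate is not any single case but rather getting the statement itself to typecheck coherently across the dependent-type features — specifically the $\Qfor$ and the parameter-$\mathtt{let}$ cases, where the "type" $\eval{\vec{A[k]}{n}}$ depends on the value $n$ which is itself being evaluated, and where one must be careful that $\eval{A}$ is only defined when $A$ is evaluable. So before the induction I would need to argue that whenever $\Phi\vdash L:A$ arises with $A$ evaluable, all the relevant subderivations also have evaluable conclusion types (e.g. in $\Qfor{k}{V}{M}:\vec{A[k]}{n}$, the type $A[k]$ must be evaluable whenever $\vec{A[k]}{n}$ is — which holds since the only evaluable vector type is $\vec{\nat}{(m:\nat)}$, forcing $A[k]=\nat$; similarly $V:\vec{\nat}{n}$ is evaluable), so that the induction hypotheses are even applicable. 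Once this closure-under-subterms property of evaluability is pinned down, the remaining work is the routine type-checking described above, and I would present only the two or three least obvious cases ($\Qfor$, parameter $\mathtt{let}$, $M@N$) in detail, relegating the leaves and the arithmetic cases to "by inspection of the definitions."
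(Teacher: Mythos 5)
Your proposal is correct and follows essentially the same route as the paper's proof: a structural induction on the typing derivation, checking case by case that each clause of $\eval{\cdot}$ lands in the required set given the induction hypotheses. Your preliminary observations (that state contexts play no role for evaluable conclusions, and that evaluability is closed under the relevant subderivations) are sensible refinements that the paper leaves implicit, but they do not change the argument.
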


\begin{lemma}%
    \label{lem:eval-reduction}
    Given an evaluable type $A$,
    a type judgement $\Phi \vdash L : A$, and $M \to N$, then $\eval{M}_\Phi = \eval{N}_\Phi$.
\end{lemma}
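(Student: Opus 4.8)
The statement I read as: if $\Phi \vdash M : A$ with $A$ evaluable and $M \to N$, then $\eval{M}_\Phi = \eval{N}_\Phi$. The plan is to prove it by induction on the derivation of the reduction $M \to N$, after first isolating the one non-routine ingredient, a \emph{substitution lemma} for parameter terms. Concretely, I would first establish that whenever $\Phi, x{:}P \vdash M : A$ with $A$ evaluable and $\Phi \vdash L : P$, one has $\eval{M[L/x]}_\Phi = |\Phi| \mapsto \eval{M}_{x:P,\Phi}\big(\eval{L}_\Phi(|\Phi|),\,|\Phi|\big)$. This is shown by structural induction on $M$: each clause in the definition of $\eval{\cdot}$ from Section~\ref{sec:translation} is compositional, so all cases are immediate unfoldings, the only points needing care being the binder cases ($\lambda' y^{P'}.M'$, $\Qfor{k}{V}{M'}$, and the $\nat$-vector destructor), where one renames bound variables to avoid capture and invokes the induction hypothesis together with Lemma~\ref{lem:eval-type} to know that the sub-evaluations are total functions of the correct arity.

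Before running the main induction I would record two preliminary facts. First, subject reduction for $\lambdaD$ (inherited from Proto-Quipper-D) gives $\Phi \vdash N : A$, so $\eval{N}_\Phi$ is defined and of the same type as $\eval{M}_\Phi$, making the equality meaningful at every step. Second, I would restrict attention, as the authors implicitly do, to the sublanguage on which $\eval{\cdot}$ is actually defined — terms of evaluable type built solely from the constructors appearing in its definition (variables, numerals, $\square$, $::$, $\vnil$, $\lambda'$, $@$, $\ifz$, $\Qrange$, $\Qfor$, and the $\nat$-vector destructor) — noting that, reading off the typing rules, every subterm that $\eval{\cdot}$ inspects is again of evaluable type, and that reduction preserves membership in this sublanguage.

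The induction then splits into \emph{congruence} and \emph{base} cases. In a congruence step $M = \mathcal C[M']$, $N = \mathcal C[N']$ with $M' \to N'$; every clause of $\eval{\cdot}$ expresses $\eval{\mathcal C[\,\cdot\,]}_\Phi$ as a fixed function of $\eval{M'}_\Phi$ together with the (unchanged) evaluations of the other subterms, so $\eval{M'}_\Phi = \eval{N'}_\Phi$ from the induction hypothesis yields $\eval{M}_\Phi = \eval{N}_\Phi$ directly (this includes reductions under $\lambda'$ and under the vector-destructor binders, where the hypothesis is applied in the extended context). For the base cases I would go through the redexes of $\lambdaD$ one at a time: $(\lambda' x^P.M')\ @\ L \to M'[L/x]$ is exactly the substitution lemma; $\ifz{0}{M_1}{M_2}\to M_1$ and $\ifz{n}{M_1}{M_2}\to M_2$ for a numeral $n\neq 0$ match the two branches of the $\ifz$ clause since $\eval{0}_\Phi$, $\eval{n}_\Phi$ are the constant functions $0$, $n$; $m\square n \to (m\square n)$ on numerals is definitional; $\Qrange\ @\ n\ @\ m$ reducing to the literal vector $n :: \cdots :: (m{-}1) :: \vnil$ evaluates on both sides to $|\Phi| \mapsto \bigtimes_{i=n}^{m-1} i$; and the $\Qfor$ unfoldings $\Qfor{k}{v::V'}{M'} \to M'[v/k] :: \Qfor{k}{V'}{M'}$ and $\Qfor{k}{\vnil}{M'} \to \vnil$ reduce, via the $::$ clause and the substitution lemma, to splitting the first factor off the product $\bigtimes_{k \in \eval{v::V'}_\Phi(|\Phi|)}$ (resp.\ to the empty product); the reduction of the $\nat$-vector destructor on $(M_1 :: M_2)$ is handled by two applications of the substitution lemma.

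I expect the main obstacle to be bureaucratic rather than conceptual: getting the substitution lemma and the accompanying context/arity bookkeeping exactly right, in particular pinning down the mild notational abuse in the $\lambda' x^P.M$ clause (the body must be evaluated in the extended context $x{:}P,\Phi$) and making capture-avoidance precise for $\Qfor$ and the vector destructors. Everything downstream of the substitution lemma is a mechanical unfolding of the definitions of Section~\ref{sec:translation}, and Lemma~\ref{lem:eval-type} supplies the needed well-definedness throughout.
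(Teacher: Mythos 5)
Your proposal is correct and follows essentially the same route as the paper: the paper likewise isolates a substitution lemma (its Lemma~\ref{lem:eval-substitution}, proved by straightforward induction on $M$) and then verifies the statement by a case analysis that splits each construct into its congruence subcases and its redex subcases ($\lambda'$-application, $\ifz$, $\square$ on numerals, $\Qfor$ unfolding, the vector destructor, and $\Qrange$), exactly as you outline. Your additional remarks on subject reduction and capture-avoidance are bookkeeping the paper leaves implicit, but they do not change the argument.
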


\subsection{Diagram encoding}

A non-evaluable type has necessarily the form $A = (n_1:\nat) \to \dots \to (n_k:\nat) \to S$,
with $S$ any state type.
We call such types \textit{translatable} since they correspond to terms that
encode quantum operations that can be described as families of diagrams.

We first define a translation $\trans{\cdot}$ from state types
into wire multiplicities as follows.
Notice that due to the symmetries of the SZX diagrams
the linear functions have the same representation as the products.
\[
    \trans\bit = 1
    \qquad
    \trans\qubit = 1
    \qquad
    \trans{\vec{A}{(n: \nat)}} = \trans{A}^{\otimes n}
    \qquad
    \trans{A \otimes B} = \trans A \otimes \trans B
    \qquad
    \trans{A \multimap B} = \trans A \otimes \trans B
\]

Given a translatable type judgement
$\Phi, \Gamma \vdash M : (n_1:\nat) \to \dots \to (n_k:\nat) \to S$
we can encode it as a family of SZX diagrams
\(
  n_1, \dots, n_k, |\Phi| \mapsto \tikzfig{judgement-trans}
\).
We will omit the brackets in our diagrams for clarity.
In a similar manner to the evaluation, we define
the translation $\trans{M}_{\Phi,\Gamma}$
recursively on the terms as follows: 

\[
    \trans{x}_{\Phi,x:A} = |\Phi|\mapsto \tikzfig{judgements/linear/ax}
    \quad
    \trans{0}_{\Phi} = |\Phi|\mapsto\tikzfig{judgements/linear/zero}
    \quad
    \trans{1}_{\Phi} = |\Phi|\mapsto\tikzfig{judgements/linear/one}
    \quad
    \trans{\meas}_{\Phi} = |\Phi|\mapsto\tikzfig{judgements/linear/meas}
\]
\[
    \trans{\new}_{\Phi} = |\Phi|\mapsto\tikzfig{judgements/linear/new}
    \quad
    \trans{U}_{\Phi} = |\Phi|\mapsto\tikzfig{judgements/linear/unitary}
    \quad
    \trans{R}_{\Phi} = n,|\Phi|\mapsto\tikzfig{judgements/linear/rotation}
\]
\[
    \trans{\lambda' x^A . M}_{\Phi,\Gamma} = x,|\Phi| \mapsto \tikzfig{judgements/linear/lambdaP}
    \quad
    \trans{M\ @N}_{\Phi,\Gamma} = |\Phi|\mapsto \tikzfig{judgements/linear/applyP}
\]
\[
    \trans{\lambda x^A . M}_{\Phi,\Gamma} = |\Phi| \mapsto \tikzfig{judgements/linear/lambda}
    \quad
    \trans{M\ N}_{\Phi,\Gamma,\Delta} = |\Phi|\mapsto \tikzfig{judgements/linear/apply}
\]
\[
    \trans{M; N}_{\Phi, \Gamma, \Delta} = |\Phi|\mapsto \tikzfig{judgements/linear/then}
    \quad
    \trans{\star}_\Phi = |\Phi|\mapsto\tikzfig{judgements/linear/skip}
    \quad
    \trans{M \otimes N}_{\Phi, \Gamma, \Delta} = |\Phi|\mapsto \tikzfig{judgements/linear/product}
\]
\[
    \trans{M;_v N}_{\Phi, \Gamma, \Delta} = |\Phi|\mapsto \tikzfig{judgements/linear/then}
    \quad
    \trans{\vnil}_\Phi = |\Phi|\mapsto\tikzfig{judgements/linear/skip}
\]
\[
    \trans{let\ x^A \otimes y^B = M\ in\ N}_{\Phi, \Gamma, \Delta} = |\Phi|\mapsto \tikzfig{judgements/linear/let}
\]
\[
    \trans{let\ x^A : y^{\vec{A}{n}} = M\ in\ N}_{\Phi, \Gamma, \Delta} = |\Phi|\mapsto \tikzfig{judgements/linear/letVec}
\]
\[
    \trans{M\!::\!N}_{\Phi, \Gamma, \Delta} = |\Phi|\mapsto \tikzfig{judgements/linear/vector}
    \quad
    \trans{\Qfor{k}{V}{M}}_{\Phi, \Gamma^n} = |\Phi| \mapsto \tikzfig{judgements/primitives/for}
\]
\[
    \trans{\ifz{L}{M}{N}}_{\Phi, \Gamma} = |\Phi| \mapsto \tikzfig{judgements/linear/ifz}
\]
where $\delta$ is the Kronecker delta and $l = \eval{L}(|\Phi|)$. Notice that
the $\new$ and $\meas$ operations share the same translation. Although $\new$
can be encoded as a simple wire, we keep the additional node to maintain the
symmetry with the measurement.

The unitary operators $U$ and rotations $R$ correspond to a predefined set of primitives,
and their translation is defined on a by case basis.
The following table shows the encoding of the operators used in this paper.
\begin{center}
\begin{tabular}{|c|c|c|c|c|c|c|}
    \hline
    Name & Rz(n) & Rz$^{-1}$(n) & Rx(n) & Rx$^{-1}$(n) & H & CNOT \\
    \hline
    Encoding & $\tikzfig{judgements/unitaries/rz}$
             & $\tikzfig{judgements/unitaries/rz-inv}$
             & $\tikzfig{judgements/unitaries/rx}$
             & $\tikzfig{judgements/unitaries/rx-inv}$
             & $\tikzfig{judgements/unitaries/hadamard}$
             & $\tikzfig{judgements/unitaries/cnot}$ \\
    \hline
\end{tabular}
\end{center}

The primitives $\Qsplit$, $\Qappend$, $\Qdrop$ and $\Qaccumap$ are translated below. 
Since vectors are isomorphic to products in the wire encoding,
the first three primitives do not perform any operation.
For the accumulating map we utilize the construction presented in Lemma~\ref{lem:accumap},
replacing the function box with a function vector input.
In the latter we omit the wires and gathers connecting the inputs and outputs of the function
to a single wire on the right of the diagram for clarity.
\[
    \trans{\Qsplit_A}_\Phi = n, m, |\Phi| \mapsto \tikzfig{judgements/primitives/split}
    \quad
    \trans{\Qappend_A}_\Phi = n, m, |\Phi| \mapsto \tikzfig{judgements/primitives/cons}
\]
\[
    \trans{\Qdrop}_\Phi = n,|\Phi| \mapsto \tikzfig{judgements/primitives/drop}
\]
\[
    \trans{\Qaccumap_{A,B,C}}_\Phi = n, |\Phi| \mapsto \tikzfig{judgements/primitives/accumap-new}
\]
where $\tau_{n,A,B,C}$ is a permutation that rearranges the vectors of functions into tensors of
vectors for each parameter and return value.
That is, $\tau_{n,A,B,C}$ reorders a sequence of registers $(A,C,B,C)\dots(A,C,B,C)$
into the sequence $(A\dots A)(C\dots C)(B\dots B)(C\dots C)$.
It is defined as follows,
\[
    \tau_{n,A,B,C}(i) = \begin{cases}
        \imod{i}{k} + a * (\idiv{i}{k}) & \text{if } \imod{i}{k} < a \\
        \imod{i}{k} + c * (\idiv{i}{k}) + a*(n-1) & \text{if } a \leq \imod{i}{k} < (a+c) \\
        \imod{i}{k} + b * (\idiv{i}{k}) + (a+c)*(n-1) & \text{if } (a+c) \leq \imod{i}{k} < (a+c+b) \\
        \imod{i}{k} + c * (\idiv{i}{k}) + (a+c+b)*(n-1) & \text{if } (a+c+b) \leq \imod{i}{k} \\
    \end{cases}
\]
for $i \in [0,(a+c+b+c)*n)$,
where mod and div are the integer modulo and division operators,
$a = \trans{A}$, $b = \trans{B}$, $c = \trans{C}$,
and $k = a+c+b+c$.

As a consequence of Lemma~\ref{lem:list-instantiation-linear},
the number of nodes in the produced diagrams grows linearly
with the size of the input.
Notice that the ZX spiders, the ground, and the Hadamard operator
are only produced in the translations of the quantum primitives.
We may instead have used other variations of the calculus
supporting the scalable extension, such as the ZH calculus~\cite{Backens_2019},
better suited for other sets of quantum operators.

\begin{lemma}%
    \label{lem:trans-reduction}
    The translation procedure is correct in respect to the operational semantics of \lambdaD.
    If $A$ is a translatable type, $\Phi, \Gamma \vdash M : A$, and $M \to N$,
    then $\trans{M}_{\Phi,\Gamma} = \trans{N}_{\Phi,\Gamma}$.
\end{lemma}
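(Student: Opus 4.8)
The proof goes by induction on the derivation of $M \to N$, following the operational semantics of $\lambdaD$. The plan is to first isolate the two substitution facts on which every $\beta$-like rule depends, and then dispatch the reduction rules one by one.

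\textbf{Substitution lemmas.} I would first prove, by induction on the structure of $M$, that the translation commutes with substitution in both layers. For a state variable $x : A$ with $\Phi,\Gamma \vdash N : A$ and $\Phi,\Gamma',\Delta,x:A \vdash M : B$, the diagram $\trans{M[N/x]}_{\Phi,\Gamma',\Delta}$ equals the diagram obtained from $\trans{M}_{\Phi,\Gamma',\Delta,x:A}$ by composing $\trans{N}_{\Phi,\Gamma}$ onto the wire carrying $x$; this is immediate from the compositional (functorial) definition of $\trans{\cdot}$ on the term formers, the base case being the axiom rule, where plugging $\trans{N}$ into the identity wire returns $\trans{N}$. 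For a parameter variable $k : \nat$ the analogous statement is that $\trans{M[L/k]}_{\Phi}$ is obtained by instantiating the family $\trans{M}_{k:\nat,\Phi}$ at the value $\eval{L}_\Phi$; here the base cases are the wire- and phase-tagging clauses, and one invokes Lemma~\ref{lem:eval-reduction} to handle parameter subterms of $M$ that themselves contain $k$.

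\textbf{Reduction rules.} With these in hand, the $\beta$-rules for $\lambda x^A.M$ and $\lambda' x^P.M$, and the let-destructor rules for $\otimes$ and $::$, reduce to checking that the composite diagram on the redex side — an abstraction box wired to its argument, or a constructor wired to its destructor — simplifies, via the $(\mathbf{sg})$ and $(\mathbf{gs})$ rewrites together with elementary spider and wire identities, to exactly the diagram produced by the substitution lemma. For $\ifz{L}{M}{N}$ one uses that $\eval{L}_\Phi$ is a concrete natural number, so the Kronecker-delta bookkeeping in the translation collapses to $\trans{M}$ or $\trans{N}$; congruence in $L$ is again covered by Lemma~\ref{lem:eval-reduction}. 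The primitive rules $\Qsplit$, $\Qappend$ and $\Qdrop$ are essentially identities on wires (vectors being isomorphic to products in the encoding), so their unfoldings are handled by $(\mathbf{sg})/(\mathbf{gs})$ together with Lemma~\ref{lem:list-init-zero} for the empty-vector base case.

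\textbf{The iteration and accumulating-map cases.} The genuine work is in $\Qfor{k}{V}{M}$ and $\Qaccumap$. For $\Qfor$, the operational semantics peels one element off $\eval{V}_\Phi$ at a time; on the diagram side this must be matched against the recursive clauses of the list-instantiation operator, so the step rule is precisely Lemma~\ref{lem:list-instantiation} (append of a singleton) and the terminating rule is Lemma~\ref{lem:list-init-zero}. For $\Qaccumap$ on a non-empty vector one expands the construction of Lemma~\ref{lem:accumap}, checking that one unrolling of the repeat gadget corresponds to one application of the threaded function through the accumulator $C$, and that the permutation $\tau_{n,A,B,C}$ together with the $\sigma(N,v,w)$ reorderings route the registers as claimed; the $n = 0$ case bottoms out in Lemma~\ref{lem:list-init-zero}. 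Every congruence (context) rule of the semantics is handled uniformly: if $M \to N$ occurs inside a one-hole term context, the induction hypothesis yields equal sub-diagrams, and since $\trans{\cdot}$ assembles the whole diagram by plugging the sub-diagram into a fixed surrounding diagram, the two results coincide. The main obstacle I anticipate is exactly the $\Qaccumap$/$\Qfor$ bookkeeping — aligning the inductive shape of the operational rules with the recursive definition of list instantiation and verifying that the explicit permutation matrices perform the routing described informally — rather than any of the $\beta$-style cases, which become routine once the substitution lemmas are established.
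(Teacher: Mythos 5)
Your proposal is correct and follows essentially the same route as the paper: a case analysis on the reduction rules of $\lambdaD$, discharging the $\beta$- and destructor-cases via the $(\mathbf{sg})/(\mathbf{gs})$ rewrites, the $\ifz{L}{M}{N}$ and empty-vector cases via Lemma~\ref{lem:list-init-zero}, the $\Qfor{k}{V}{M}$ step via Lemma~\ref{lem:list-instantiation}, the $\Qaccumap$ case via Lemma~\ref{lem:accumap} and the permutation $\tau_{n,A,B,C}$, and the congruence rules via the inductive hypothesis together with Lemma~\ref{lem:eval-reduction}. The only (harmless) difference is that you explicitly isolate a diagram-level substitution lemma, whereas the paper states such a lemma only for the evaluation layer and treats the diagrammatic version as immediate from the compositional definition of $\trans{\cdot}$.
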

\section{Application example: QFT}%
\label{sec:qft-example}%

The Quantum Fourier Transform is an algorithm used extensively in quantum computation,
notably as part of Shor's algorithm for integer factorization~\cite{nielsen_chuang_quantum_programming}.
The QFT function operates generically over $n$-qubit states and in general a circuit encoding of it
requires $\bigo(n^2)$ gates.
In this section we present an encoding of the algorithm as a \lambdaD\ term,
followed by the translation into a family of constant-sized diagrams.
\reducedorlong{}{The corresponding Proto-Quipper-D program is listed in Appendix~\ref{sec:qft-code}.}

The following presentation divides the algorithm into three parts.
The \textit{crot} term applies a controlled rotation over a qubit with a parametrized angle.
\textit{apply\_crot} operates over the last $n-k$ qubits of an $n$-qubit state
by applying a Hadamard gate to the first one and then using it as target
of successive \textit{crot} applications using the rest of the qubits as controls.
Finally, \textit{qft} repeats \textit{apply\_crot} for all values of $k$.
In the terms, we use $n\dots m$ as a shorthand for $\Qrange\ @n\ @m$.
\\

{
   \setlength{\parindent}{0pt}

   \(
      \text{crot}: (n:\nat)\to (\qubit\otimes\qubit)\multimap (\qubit\otimes\qubit)
   \)\\
   \(
     \text{crot}:= \lambda' n^{\nat}.\lambda qs^{\qubit \otimes \qubit}.
       \Qlet{c^\qubit \otimes q^\qubit}{qs}{}
       \Qlet{c^\qubit \otimes q^\qubit}{\mathtt{CNOT}\ c\ (\mathtt{Rz}\ @2^n\ q)}{}
       \mathtt{CNOT}\ c\ (\mathtt{Rz^{-1}}\ @2^n\ q)
   \)\\

   \(
      \text{apply\_crot}: (n:\nat) \to (k:\nat) \to \vec{\qubit}{n} \multimap \vec{\qubit}{n}
   \)%
   %\vspace*{-1em}
   \begin{flalign*}
      \text{apply\_crot}:=\ & \lambda' n^\nat.\ \lambda' k^\nat.\ \lambda qs^{\vec{\qubit}{n}}. & \\
      & \ifz{(n-k)}{qs}{}\\
      & \Qlet{h^{\vec{\qubit}{k}}\otimes qs'^{\ \vec{\qubit}{n-k}}}
         {\Qsplit\ @k\ @(n-k)\ qs}{}\\
      & \Qlet{q^\qubit \otimes cs^{\vec{\qubit}{n-k-1}}}
         {qs'}{}\\
      & \Qlet{fs^{\vec{(\qubit\otimes\qubit\multimap\qubit\otimes\qubit)}{(n-k-1)}}}
         {\Qfor{m^\nat}{2..(n-k+1)}{\text{crot }@m}}{\!\!}\\
      & \Qlet{cs'^{\ (\vec{\qubit}{n-k-1})}\otimes q'^{\ \qubit}}
         {\Qaccumap\ fs\ (H\ q)\ cs}{}\\
      & \text{concat } h\ (q': cs')
   \end{flalign*}

   \(
      \text{qft}: (n:\nat) \to \vec{\qubit}{n}\multimap\vec{\qubit}{n}
   \) %
   %\vspace*{-1em}
   \begin{flalign*}
      \text{qft} :=\ & \lambda' n^\nat.\lambda qs^{\vec{\qubit}{n}}.
         \Qcompose & \\
         & (\Qfor{k^\nat}{\text{reverse\_vec } @(0..n)}
         {\lambda qs'^{\ \vec{\qubit}{n}}.\text{apply\_crot } @n\ @k\ qs'})\ qs
   \end{flalign*}
}

The translation of each term into a family of diagrams is shown below.
We omit the wire connecting the function inputs to the right side
of the graphs for clarity and eliminate superfluous gathers and splitters
using rules $\mathbf{(sg)}$ and $\mathbf{(gs)}$.
Notice that, in contrast to a quantum circuit encoding,
the resulting diagram's size does not depend on the number of qubits $n$.

\[
   \trans{\text{crot}}=\ n \mapsto \tikzfig{examples/qft-crot}
\]
\[
   \trans{\text{apply\_crot}}=\ n,k \mapsto \tikzfig{examples/qft-applycrot}
\]
\[
   \trans{\text{qft}}=\ n \mapsto \tikzfig{examples/qft-main}
\]
\section{Discussion}

In this article, we presented an efficient method to compile
parametric quantum programs written in a fragment of the Proto-Quipper-D
language into families of SZX diagrams.
We restricted the fragment to strongly normalizing terms that can be represented
as diagrams.
Additionally, we introduced a notation to easily compose elements of a diagram
family in parallel.
We proved that our method produces compact diagrams
and shown that it can encode non-trivial algorithms.

A current line of work is defining categorical semantics for the calculus and
families of diagrams, including a subsequent proof of adequacy for the
translation. More work needs to be done to expand the fragment of the Quipper
language that can be translated.

We would like to acknowledge Benoît Valiron for helpful discussion on this
topic, and Frank Fu for his help during the implementation of the
Proto-Quipper-D primitives. This work was supported in part by the French
National Research Agency (ANR) under the research projects SoftQPRO
ANR-17-CE25-0009-02 and Plan France 2030 ANR-22-PETQ-0007, by the DGE of the
French Ministry of Industry under the research project PIA-GDN/QuantEx
P163746-484124, by the project STIC-AmSud project Qapla' 21-SITC-10, the
ECOS-Sud A17C03 project, the PICT-2019-1272 project, the French-Argentinian IRP SINFIN
and the PIP 11220200100368CO project.

\bibliographystyle{eptcs}
\bibliography{main}

\ifreduced
\else

    \newpage
    \appendix

    \section{Semantics of the SZX calculus}%
\label{sec:szx-extended}

We reproduce below the standard interpretation of \szxGND\ diagrams as
density matrices and completely positive
maps~\cite{CJPV19completenessMix,carette_quantum_2021}, modulo scalars.

Let $D_n \subseteq \C^{2^n \times 2^n}$ be the set of n-qubit density
matrices. We define the functor $\interpret{\cdot} : \zxGND \to
\mathbf{CPM(Qubit)}$ which associates to any diagram $\diag : n \to m$ a completely
positive map $\interpret{\diag} : D_n \to D_m$, inductively as
follows.

\[
  \interpret{\diag_1 \otimes \diag_2}
  \;:=\;
  \interpret{\diag_1} \otimes \interpret{D_2}
  \qquad
  \interpret{\diag_2 \circ \diag_1}
  \;:=\;
  \interpret{\diag_2} \circ \interpret{\diag_1}
\]
\[
    \interpret{\tikzfig{szx/elem/hadamard-szx}} :=
      \rho \mapsto V~\rho~V^\dagger
      \text{ where\ }
      V = \sum_{x,y \in \Ftwo^k}
      (-1)^{x \bullet y} \ketbra{y}{x}
\]
\[
    \interpret{\tikzfig{szx/elem/spiderZ-szx}} :=  
      \rho \mapsto V~\rho~V^\dagger
      \text{ where\ }
      V = \sum_{x \in \Ftwo^k}
      e^{i x\bullet\overrightarrow\alpha} \ket{x}^{\otimes m}\bra{x}^{\otimes n}
\]
\[
    \interpret{\tikzfig{szx/elem/spiderX-szx}} :=
    \interpret{\tikzfig{szx/elem/hadamard-szx}}^{\otimes m} \circ
    \interpret{\tikzfig{szx/elem/spiderZ-szx}} \circ
    \interpret{\tikzfig{szx/elem/hadamard-szx}}^{\otimes n}
\]
\[
    \interpret{\tikzfig{szx/elem/ground-szx}} :=
      \rho \mapsto
      \sum_{x \in \Ftwo^k}
      \bra{x} \rho \ket{x}
    \qquad
    \interpret{\tikzfig{szx/elem/ground-szx-rotated}} :=
      \sum_{x \in \Ftwo^k}
      \ketbra{x}{x}
    \qquad
    \interpret{\tikzfig{szx/elem/wire-szx}} :=
    \rho \mapsto \rho
\]
\[
    \interpret{\tikzfig{szx/elem/gather}} :=
      \rho \mapsto \rho
    \qquad
    \interpret{\tikzfig{szx/elem/arrow}} :=  
      \rho \mapsto V~\rho~V^\dagger
      \text{ where\ }
      V = \sum_{x \in \Ftwo^k}
      \ketbra{\sigma(x)}{x}
\]
\[
    \interpret{\tikzfig{szx/elem/cap-szx}} :=
      \rho \mapsto
      \sum_{x \in \Ftwo^k}
      \bra{xx} \rho \ket{xx}
    \qquad
    \interpret{\tikzfig{szx/elem/cup-szx}} :=
      \sum_{x \in \Ftwo^k}
      \ketbra{xx}{xx}
    \qquad
    \interpret{\tikzfig{szx/elem/empty}} := Id_0
\]
\[
    \interpret{\tikzfig{szx/elem/swap-szx}} :=
      \rho \mapsto V~\rho~V^\dagger
      \text{ where\ }
      V = \sum_{x \in \Ftwo^k, y \in \Ftwo^l}
      \ketbra{yx}{xy}
\]
where $\forall u,v\in \mathbb{R}^n, u \bullet v = \sum_{i=1}^m u_i v_i$.

The \szxGND\ calculus defines a set of rewrite rules, shown below.
\[\scalebox{0.9}{\tikzfig{szx/rules/zxRules}}\]
\[
  \tikzfig{szx/rules/numberRule-a}
  \;\overset{\mbox{\scriptsize\GndRemoveRule}}{=}\;
  \tikzfig{szx/elem/empty}
  \qquad
  \tikzfig{szx/rules/HRule-a}
  \;\overset{\mbox{\scriptsize\GndHadamardRule}}{=}\;
  \tikzfig{szx/rules/HRule-b}
  \qquad
  \tikzfig{szx/rules/MergeRule-a}
  \;\overset{\mbox{\scriptsize\GndDiscardRule}}{=}\;
  \tikzfig{szx/rules/MergeRule-b}
  \qquad
  \tikzfig{szx/rules/CNOTRule-a}
  \;\overset{\mbox{\scriptsize\GndCNOTRule}}{=}\;
  \tikzfig{szx/rules/CNOTRule-b}
\]
\[
    \tikzfig{szx/rules/split-gather}
    \ \stackrel{\mathbf{(sg)}}{=}\ %
    \tikzfig{szx/rules/split-gather-2}
    \qquad\qquad
    \tikzfig{szx/rules/gather-split}
    \ \stackrel{\mathbf{(gs)}}{=}\ %
    \tikzfig{szx/rules/gather-split-2}
\]
\[\scalebox{0.9}{\tikzfig{szx/rules/gatherRules}}\]
Additionally, for the arrows restricted to permutations of wires
we have the following rules~\cite{carette_quantum_2021}:
\[\scalebox{0.9}{\tikzfig{szx/rules/permRules}}\]
Finally, since wires with cardinality zero correspond to empty mappings they can be discarded from the diagrams.
\[\scalebox{0.9}{\tikzfig{szx/rules/zeroRules}}\]
    \section{Operational Semantics of the \texorpdfstring{$\lambdaD$}{lambda sub D} calculus}%
\label{sec:op-semantics}

We define a weak call-by-value small step operational semantics on Table~\ref{fig:rewrite-system}. 
\begin{table}[htbp]
\begin{mdframed}
    \begin{align*}
        V :=\; & x \;|\; C \;|\; 0 \;|\; 1 \;|\; \meas \;|\; \new \;|\; \U \;|\; \\
        & \lambda x^S. M \;|\; \lambda' x^P. M \;| \star \;|\; M \otimes N \;|\; \\
        & \vnil \;|\; M :: N \;\\
    \end{align*}
    \begin{align*}
        (\lambda x.M)V &\to M[V/x]\\
        (\lambda' x.M)@V &\to M[V/x]\\
        \Qlet{x\otimes y}{M_1\otimes M_2}{N}&\to N[x/M_1][y/M_2]\\
        \Qlet{x::y}{M_1 :: M_2}{N}&\to N[x/M_1][y/M_2]\\
        \ifz{V}{M}{N}&\to 
        \begin{cases}
            M\qquad\text{If } V=0 \\ 
            N\qquad\text{Otherwise}
        \end{cases}\\
        \star\ ;\ M &\to M\\
        \vnil\ ;_v\ M &\to M\\
        V_1 \square V_2 &\to V\qquad\text{Where }V_i = n_i\text{ and }V=n_1\square n_2\\
        \Qfor{k}{M_1\ :: M_2}{N} &\to N[k/M_1]\ :: \Qfor{k}{M_2}{N}\\
        \Qfor{k}{\vnil}{N} &\to \vnil\\
    \end{align*}
    \smallskip
    \[
      \infer{MV\to NV}{M\to N}
      \qquad 
      \infer{LM\to LN}{M\to N}
      \qquad
      \infer{M@V\to N@V}{M\to N}
      \qquad 
      \infer{L@M\to L@N}{M\to N}
    \]

    \[
      \infer{\Qlet{x\otimes y}{M}{L}\to \Qlet{x\otimes y}{N}{L}}{M\to N}
      \qquad
      \infer{\Qlet{x:: y}{M}{L}\to \Qlet{x:: y}{N}{L}}{M\to N}
    \]
    \[
      \infer{L\square M\to L\square N}{M\to N}
      \qquad
      \infer{M\square V\to M\square V}{M\to N}
    \]
    \[
      \infer{M\ ;\  L\to N\ ;\ L}{M\to N}
      \qquad
      \infer{\Qlet{x: y}{M}{L}\to \Qlet{x: y}{N}{L}}{M\to N}
    \]
    \[
      \infer{\ifz{M}{L_1}{L_2}\to\ifz{N}{L_1}{L_2}}{M\to N}
      \qquad
      \infer{\Qfor{k}{M}{L}\to\Qfor{k}{N}{L}}{M\to N}
    \]
    \caption{Rewrite system for $\lambdaD$.}
    \label{fig:rewrite-system}
    \end{mdframed}
\end{table}

A key point to note here is that every rewriting rule preserves the state. There are no measurements or unitary operations applied, the rewriting is merely syntactical. Since our goal is translation into an SZX-diagram, this system is powerful enough. We include the rewrite rules for the primitives on Table~\ref{fig:primitives-operation}.

\begin{table}[htbp]
    \begin{mdframed}
        \begin{align*}
        \Qaccumap\ @n\ xs\ fs\ z \to
          & \ifz{n}{xs\ ;_v\ fs\ ;_v\ \vnil\otimes z}{}\\
          & \Qlet{x :: xs'}{xs}{\Qlet{f :: fs'}{fs}{}}\\
          & \Qlet{y\otimes z'}{f\ x\ z}{\Qlet{ys \otimes z''}{\Qaccumap\ @(n-1)\ xs'\ fs'\ z'}{}}\\
          & (y :: ys) \otimes z''\\
        \Qsplit\ @n\ @m\ xs \to
          & \ifz{n}{\vnil\otimes xs}{} \Qlet{y :: xs'}{xs}{}\\
          & \Qlet{ys_1\otimes ys_2}{\Qsplit @(n-1)\ @m\ xs'}{(y :: ys_1)\otimes ys_2}\\
        \Qappend\ @n\ @m\ xs\ ys \to
          & \ifz{n}{xs\ ;_v\ ys}{} \\
          & \Qlet{x :: xs'}{xs}{x :: (\Qappend\ @(n-1)\ @m\ xs'\ ys)}\\
        \Qdrop\ @n\ xs \to
          & \ifz{n}{xs\ ;_v\ \star}{\Qlet{x :: xs'}{xs}{x\ ;\ \Qdrop\ @(n-1)\ xs'}}\\
        \Qrange\ @n\ @m \to
          & \ifz{m-n}{\vnil}{n\ :: \Qrange\ @(n+1)\ @m}\\
        \end{align*}
        \caption{Reductions pertaining to the primitives.}
        \label{fig:primitives-operation}
        \end{mdframed}
    \end{table}

Additionally, we define useful macros based on these functions on Table~\ref{fig:function-macros}. They provide syntactic sugar to deal with state vectors.

\begin{table}[htbp]
  \begin{mdframed}
      \begin{align*}
      \Qmap\ @n\ xs\ fs :=\ 
        & \mathtt{let}\ {fs'\otimes u_1} = \Qaccumap\ @n\ fs\\
        & \qquad (\Qfor{k}{(0..n)}{\lambda f. \lambda u. (\lambda x.\lambda  u. f x \otimes u) \otimes u})\ \star\\
        & \mathtt{in\ let}\ {xs'\otimes u_2} = \Qaccumap\ @n\ xs\ fs'\ \star\\
        & \mathtt{in}\ {u_1\ ;\ u_2\ ;\ xs'}\\
      \Qfold\ @n\ xs\ fs\ z :=\ 
        & \mathtt{let}\ {fs'\otimes u} = \Qaccumap\ @n\ fs\\
        & \qquad (\Qfor{k}{(0..n)}{\lambda f. \lambda u. (\lambda x.\lambda  y. \star \otimes f\  x\  y) \otimes u})\ \star\\
        & \mathtt{in\ let}\ {us\otimes r} = \Qaccumap\ @n\ xs\ fs'\ z\\
        & \mathtt{in}\ {u\ ;\ \Qdrop\ @n\ us\ ;\ r}\\
      \Qcompose\ @n\ xs=\ 
        &\Qfold\ @n\ xs\ (\Qfor{k}{0..n}{(\lambda f.\lambda g.\lambda x. f\ (g\ x))})\ (\lambda x. x)\\
      \end{align*}
      \caption{Function macros.}
      \label{fig:function-macros}
      \end{mdframed}
  \end{table}

    \section{Implementation of primitives in Proto-Quipper-D}%
\label{sec:primitive-trans}

The implicitly recursive primitives defined in Section~\ref{sec:fragment}
can be implemented in proto-quipper-D as follows.
The implementation has been checked with the \texttt{dpq} tool
implemented by Frank Fu (see \url{https://gitlab.com/frank-peng-fu/dpq-remake}).

\inputminted{haskell}{examples/primitives.dpq}

    \section{QFT algorithm in Quipper code}%
\label{sec:qft-code}%

The following Proto-Quipper-D code corresponds to the algorithm
presented in Section~\ref{sec:qft-example}.
This implementation has been checked with the \texttt{dpq} tool
implemented by Frank Fu (see \url{https://gitlab.com/frank-peng-fu/dpq-remake}).
Notice that, in contrast to the presented lambda terms,
the type checker implementation
requires explicit encodings of the Leibniz equalities between parameter types.

\inputminted[lastline=39]{haskell}{examples/qft.dpq}
    \section{Proofs}%
\label{sec:proofs}

\begin{lemmaN}{\!\textbf{(\ref{lem:accumap})}}
    Let $g : 1_n \otimes 1_s \to 1_{m} \otimes 1_s$ and $k\geq 1$, then
    \[
        \tikzfig{szx/accumap}
        \quad=\quad%
        \tikzfig{szx/accumap-2}
    \]
\end{lemmaN}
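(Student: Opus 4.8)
The plan is to prove Lemma~\ref{lem:accumap} by induction on the number of iterations $k \geq 1$, using the repeat construction of Carette et al.~\cite{carette_quantum_2021} as the structural skeleton and discharging the rewiring obligations with the gather/split rules $\mathbf{(sg)}$, $\mathbf{(gs)}$, the permutation rewrite rules, and the snake equations of the compact structure. The right-hand side of the statement is the $k$-fold composition $g^{\circ k}$ in which the $1_s$-leg is threaded from one copy to the next while the $1_n$-inputs and $1_m$-outputs are kept external and gathered into registers of size $kn$ and $km$; the left-hand side is the compact loop gadget, essentially the accumulating variant of $\tikzfig{szx/repeat}$.

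For the base case $k = 1$ I would simply unfold the left-hand diagram: the shift permutation acting on a single block of wires is the identity, so the cup and cap introduced by the construction become connected by straight wires and are eliminated by the snake equations; the surviving registers of size $n$ and $m$ (one block each) collapse via $\mathbf{(gs)}$, leaving exactly $g$, which is the right-hand side for $k=1$.

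For the inductive step, assuming the identity for $k$, I would take the diagram for $k+1$ and peel off one copy of $g$ from the parallel product $g^{k+1} = g^{k}\otimes g$. Concretely: split the gathered input register of size $(k+1)n$ as $kn + n$ and the output register of size $(k+1)m$ as $km + m$ using $\mathbf{(sg)}$/$\mathbf{(gs)}$; rewrite the shift permutation on $k+1$ blocks as the composite of the shift on $k$ blocks with the transposition that moves the extra block into position, using the permutation rules; then yank the cup/cap so that the first $k$ copies of $g$ together with the residual loop gadget form precisely the left-hand diagram of the lemma at size $k$, while the extra copy of $g$ is left composed after it along the accumulator wire $1_s$. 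Invoking the induction hypothesis turns the first part into the $k$-fold chained composition, and appending the extra $g$ yields the $(k+1)$-fold chain, which is the right-hand side for $k+1$.

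I expect the main obstacle to be the permutation and cup/cap bookkeeping in the inductive step: one must verify carefully that the cyclic-shift gadget, after peeling off a block, restricts to the same gadget one size down — i.e. that "feed the $s$-output of copy $i$ to the $s$-input of copy $i{+}1$" is preserved by the decomposition — and that the gather/split manipulations on the list registers commute past it. None of the individual rewrites is hard, but assembling them into a clean picture-level argument is the bulk of the work. As an alternative that sidesteps the combinatorics, one could instead interpret both diagrams as completely positive maps via $\interpret{\cdot}$, check directly that the two linear maps agree (both send $([v_1,\dots,v_k], a_0)$ to $([w_1,\dots,w_k], a_k)$ with $(w_i,a_i) = g(v_i,a_{i-1})$), and conclude; I would nonetheless keep the diagrammatic induction as the primary route, since it mirrors the proof of the underlying repeat construction.
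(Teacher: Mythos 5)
Your proposal matches the paper's proof in essence: both argue by induction on $k$, handling the inductive step by unfolding one copy of $g$ and invoking the gather/split rules $\mathbf{(gs)}$/$\mathbf{(sg)}$ before applying the induction hypothesis. The only detail worth flagging is the base case: in the paper the degenerate loop is not removed by snake equations but by the identity rule $(\bm{i1})$ together with the zero-cardinality wire rules $(\bm{\emptyset 1})$, $(\bm{\emptyset 4})$, since at $k=1$ the feedback register has multiplicity zero and is discarded as an empty mapping.
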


\begin{proof}
    By induction on $k$.
    If $k=1$,
    \[
        \tikzfig{proof/accumap-a-10}
        \;\overset{\scriptstyle{\bf{(i1, \emptyset 1)}}}{=}\;
        \tikzfig{proof/accumap-a-20}
        \;\overset{\scriptstyle{\bf{(\emptyset 4)}}}{=}\;
        \tikzfig{proof/accumap-a-30}
    \]
    If $k>1$,
    \[
        \tikzfig{proof/accumap-b-10}
        \;\overset{\scriptstyle{def}}{=}\;
        \tikzfig{proof/accumap-b-20}
        \;\overset{\scriptstyle{\bf{(gs)}}}{=}\;
        \tikzfig{proof/accumap-b-30}
    \]
    \[
        \;=\;
        \tikzfig{proof/accumap-b-40}
        \;\overset{\scriptstyle{HI}}{=}\;
        \tikzfig{proof/accumap-b-50}
    \]
\end{proof}

\begin{lemmaN}{\!\textbf{(\ref{lem:list-instantiation})}}
    For any diagram family $D$, $n_0 : \N$, $N : \N^k$,
    \[
        \tikzfig{szx/list-instantiation-append}
        \;=\;
        \tikzfig{szx/list-instantiation-append-split}
    \]
\end{lemmaN}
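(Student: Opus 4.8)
The plan is to argue by structural induction on the construction of the diagram family $D$, following the recursive definition of the list instantiation operator in Figure~\ref{fig:list-instantiation}. The claimed equality says that instantiating $D$ over the list $n_0 :: N$ can be reorganised---using the split and gather rules together with a single permutation arrow---into the parallel composition of the instance $D(n_0)$ with the instantiation $(D(n), n \in N)$. Throughout, the mediating permutation on the right-hand side is the one determined by the block matrices $\sigma_f^{n_0::N}$ and $\sigma_g^{n_0::N}$ of Figure~\ref{fig:list-instantiation}, whose recursive clause is exactly the $n :: N'$ case; the real content of the induction is to show that the permutations contributed by the individual generators assemble into that $\sigma$.

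First I would dispatch the base cases, one per generator. For a wire the cardinalities satisfy $v(n_0 :: N) = v(n_0) + v(N)$, so the identity is an instance of $\mathbf{(sg)}$ and $\mathbf{(gs)}$ splitting the wide wire. The ground, Hadamard and permutation-arrow generators act pointwise, so the same splitting suffices once one notes that $\sigma(n_0 :: N)$ restricts to $\sigma(n_0) \otimes \sigma(N)$. For the Z- and X-spiders the key fact is $\overrightarrow{\alpha}(n_0 :: N) = \overrightarrow{\alpha}(n_0) :: \overrightarrow{\alpha}(N)$: a single spider carrying the concatenated phase vector splits into two spiders by the scalable fusion rule \SpiderRule, whose legs are then regrouped by a permutation arrow. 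The gather generator is the delicate base case: here one must verify that $\sigma_f^{n_0::N}$ and $\sigma_g^{n_0::N}$ genuinely decompose the combined permutation into an $n_0$-part and an $N$-part, which is read off directly from their recursive definitions.

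For the inductive step there are two cases. If $D = D_2 \circ D_1$, the definition gives $(D(n), n \in n_0 :: N) = (D_2(n), n \in n_0 :: N) \circ (D_1(n), n \in n_0 :: N)$; applying the induction hypothesis to each factor and cancelling the inner permutation against its inverse with the permutation rules yields the claim. If $D = D_1 \otimes D_2$, the induction hypothesis puts each factor into block form, leaving the four-block sequence $\bigl(D_1(n_0),\, (D_1)(N),\, D_2(n_0),\, (D_2)(N)\bigr)$, which must be reordered into $\bigl(D_1(n_0),\, D_2(n_0),\, (D_1)(N),\, (D_2)(N)\bigr)$; this reordering is itself a permutation arrow, and one checks that it coincides with the $\sigma$ demanded by the statement using the block recursion of $\sigma_f, \sigma_g$.

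I expect the only real obstacle to be the bookkeeping of these permutation arrows: proving that the composite of the permutations introduced at each generator, propagated through the tensor and composition clauses, equals the single permutation $\sigma(n_0 :: N, \cdot, \cdot)$ appearing on the right. Because SZX diagrams are open graphs in which only the topology of nodes and edges matters, this amounts to a finite combinatorial check that wire endpoints are matched correctly, and the recursive block structure of $\sigma_f$ and $\sigma_g$ is precisely what is designed to make that check go through.
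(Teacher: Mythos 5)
Your proposal is correct and follows essentially the same route as the paper: induction on the construction of $D$, with the gather as the one generator requiring an explicit permutation computation (handled via $\mathbf{(p)}$, $\mathbf{(sg)}$, $\mathbf{(gs)}$) and the remaining cases reduced to the commutation of gathers past the other generators (the paper cites rules $(z1)$, $(z2)$, $(z3)$, $(w)$, $(p4)$ where you invoke spider fusion and pointwise splitting, but the underlying manipulations coincide). Your explicit treatment of the tensor and composition clauses, including the interleaving permutation and the cancellation of inner permutations, is exactly the bookkeeping the paper leaves implicit.
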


\begin{proof}
    By induction on the term construction
    \begin{itemize}
        \item If $\diag$ is a gather, 
        \[
            \tikzfig{proof/list-i-gather-10}
            \;\overset{\scriptstyle{\bf{(p)}}}{=}\;
            \tikzfig{proof/list-i-gather-20}
        \]
        \[
            \;\overset{\scriptstyle{\bf{(sg,gs,p)}}}{=}\;
            \tikzfig{proof/list-i-gather-30}
        \]
        \item The other cases can be directly derived from the commutation properties
        of the gather generator via rules $(z1), (z2), (z3), (w)$, and $(p4)$.
    \end{itemize}
\end{proof}

\begin{lemmaN}{\!\textbf{(\ref{lem:list-init-zero})}}
    A diagram family initialized with the empty list corresponds to the empty map.
    For any diagram family $D$,
    \[
        \tikzfig{szx/list-instantiation-empty}
        \;=\;
        \tikzfig{szx/list-instantiation-empty-flat}
    \]
\end{lemmaN}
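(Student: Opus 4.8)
The plan is to prove the statement by structural induction on the construction of the diagram family $D$, mirroring the recursive definition of the list instantiation operator in Figure~\ref{fig:list-instantiation} and reusing the induction scheme already employed for Lemma~\ref{lem:list-instantiation}. The key observation is that, for the empty list, every quantity controlled by a parameter degenerates to its trivial value, so each generator instantiates to a diagram supported only on zero-cardinality wires.

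For the base cases, when $D$ is a single generator, I would unfold the corresponding clause of Figure~\ref{fig:list-instantiation} at $N = [\,]$. In each clause the wire cardinalities become $v([\,]) = \sum_{n \in [\,]} v(n) = 0$, the phase vectors become the empty concatenation $\overrightarrow\alpha([\,]) = [\,]$, and --- in the gather clause --- the permutation arrow becomes $\sigma([\,], v, w) = (\sigma_f^{[\,]} \vert \sigma_g^{[\,]}) = (Id_0 \vert Id_0)$, that is, the empty permutation on zero wires. Hence every base case instantiates to a diagram all of whose wires carry cardinality zero, and such a diagram equals the empty map by the zero-cardinality wire rewrite rules collected in Appendix~\ref{sec:szx-extended}.

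For the inductive step there are two cases. If $D = D_1 \otimes D_2$, then by definition $(D(n), n \in [\,]) = (D_1(n), n \in [\,]) \otimes (D_2(n), n \in [\,])$; the induction hypothesis turns each factor into the empty map, and the tensor product of the empty map with itself is again the empty map. The case $D = D_2 \circ D_1$ is handled identically, using $(D(n), n \in [\,]) = (D_2(n), n \in [\,]) \circ (D_1(n), n \in [\,])$ together with the fact that the empty map composed with the empty map is the empty map.

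The step I expect to be the only real obstacle is the gather generator: one must check that the permutation matrix $\sigma(N, v, w)$ genuinely collapses to the empty permutation when $N = [\,]$ --- immediate from $\sigma_f^{[\,]} = \sigma_g^{[\,]} = Id_0$ --- and that the resulting legless (zero-multiplicity) gather really is discardable, which is exactly what the zero-wire rules guarantee. Everything else is a mechanical unwinding of the definitions.
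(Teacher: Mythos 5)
Your proof is correct and rests on the same key observation as the paper's: at $N = [\,]$ every wire cardinality collapses to $v([\,]) = 0$, after which the zero-cardinality rewrite rules $({\bf\emptyset 1})$--$({\bf\emptyset 4})$ and $({\bf\emptyset w})$ eliminate every generator. The paper states this globally in two lines rather than organizing it as a structural induction, but the substance is identical.
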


\begin{proof}
    Notice that any wire in the initialized diagrams has cardinality zero.
    By rules ${(\bf\emptyset 1)}$, ${(\bf\emptyset 2)}$, ${(\bf\emptyset 3)}$,
    ${(\bf\emptyset 4)}$, and ${(\bf \emptyset w)}$
    every internal node can be eliminated from the diagram.
\end{proof}

\begin{lemmaN}{\!\textbf{(\ref{lem:list-instantiation-linear})}}
    The list instantiation procedure on an $n$-node diagram family adds
    $\bigo(n)$ nodes to the original diagram.
\end{lemmaN}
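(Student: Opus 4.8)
The plan is a direct structural induction on the diagram family $D$, reading off the recursive definition of the list instantiation operator in Figure~\ref{fig:list-instantiation}. The crucial observation is that this definition never recurses on the list $N$: it recurses only on the syntactic structure of $D$, and at every clause the number of generators introduced is bounded by a universal constant $c$ independent of $N$ and of the entries of $N$. Writing $|D|$ for the number of generator-nodes of $D$ and $f(D)$ for the number of generator-nodes of the flattened family $(D(n),\, n\in N)$, it suffices to prove $f(D)\le c\,|D|$.

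First I would handle the combinator clauses. For $D = D_1\otimes D_2$ and $D = D_2\circ D_1$ the operator merely distributes, $((D_1\otimes D_2)(n),\,n\in N) = (D_1(n),\,n\in N)\otimes(D_2(n),\,n\in N)$ and likewise for composition, so no generators are added beyond those produced by the two recursive calls; this gives $f(D_1\otimes D_2) = f(D_1)+f(D_2)$ and the analogous identity for $\circ$. Then I would treat the base cases, i.e. the single generators. For a wire, a ground, a Hadamard, a Z-spider, or an X-spider, Figure~\ref{fig:list-instantiation} replaces the generator by a single flat generator of the same kind, only changing its wire tags or its phase vector; that is one node. For an arrow and for a gather, the flat replacement is the generator together with the permutation arrow $\sigma(N,v,w)$ flanked by two gather gates, which as a diagram is a fixed number of generators — one arrow and two gathers — since in the scalable calculus a gather (and an arrow) of any arity is a single generator. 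Hence $f(D)\le c$ for some small constant $c$ (e.g. $c=3$) whenever $D$ is a single generator.

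Combining the combinator clauses with the base cases, a one-line induction yields $f(D)\le c\,|D|$ for every diagram family $D$, so the flattened diagram has $|D| + f(D) = \bigo(|D|)$ nodes, which is exactly the claim.

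The only point that needs care — and the step I expect to be the main, though modest, obstacle — is the treatment of the permutation arrows $\sigma(N,v,w)$. One might worry that instantiating such an arrow on a long list $N$ causes a blow-up; but because $\sigma(N,v,w)$ is realised as a \emph{single} arrow generator (of large arity, yet still one node) placed between two gather gates, its contribution to the node count is constant, independent of $|N|$. With that observation in hand the rest is a routine structural induction following Figure~\ref{fig:list-instantiation}.
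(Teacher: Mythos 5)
Your proof is correct and follows essentially the same route as the paper: a structural induction over the clauses of Figure~\ref{fig:list-instantiation}, observing that each clause contributes at most a constant number of generators independent of $N$. The paper's version is slightly sharper --- it notes that in fact only the gather clause introduces any new node (exactly one, the permutation arrow), so the overhead equals the number of gathers --- but this refinement is not needed for the $\bigo(n)$ bound.
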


\begin{proof}
    By induction on the term construction.
    Notice that the instantiation of any term except the gather
    does not introduce any new nodes, and the gather introduction
    creates exactly one extra node.
    Therefore, the list instantiation adds a number of nodes equal to
    the number of gather generators in the diagram.
\end{proof}

\begin{lemma}%
    \label{lem:eval-substitution}
    Given type judgements $\Phi, x:A \vdash M : B$, and $\Phi\vdash N : A$. $\eval{M}_{x:A,\Phi} (\eval{N}_{\Phi},|\Phi|)= \eval{M[N/x]}_\Phi($ $|\Phi|$ $)$.
\end{lemma}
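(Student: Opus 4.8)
The plan is to prove the statement by structural induction on the term $M$, equivalently by induction on the syntax-directed derivation of $\Phi, x:A \vdash M : B$. Two preliminary observations set the stage. First, since $B$ is evaluable and the type system is syntax-directed, every subterm occurring in the derivation is again evaluably typed (the state-only constructors such as $\Qlet{x^{S_1}\otimes y^{S_2}}{M}{N}$ cannot occur, by linearity), so $\eval{\cdot}$ is defined at every subterm, and the usual substitution lemma for the type system gives $\Phi \vdash M[N/x] : B$, so both sides of the equation are well defined. Second, it is harmless to read $\eval{L}_\Psi$ as a function of the assignment of values to the parameter variables $|\Psi|$, the written order of the context and of the argument tuple being a presentational convention; this is precisely what lets the induction hypothesis be reused after the context has been enlarged by a binder.

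The base cases are immediate. For $M = x$ we have $B = A$ and $M[N/x] = N$, and $\eval{x}_{x:A,\Phi}$ is the projection onto the $x$-slot, so both sides equal $\eval{N}_\Phi(|\Phi|)$. For a variable $M = y$ with $y \neq x$, for a numeral $M = n$, for $M = \vnil^\nat$, and for $M = \Qrange$, the substitution leaves $M$ unchanged and $\eval{M}$ is either a projection onto a component of $|\Phi|$ untouched by the substitution or a constant independent of the $x$-slot, so equality is direct.

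For the compound non-binding constructors $M_1 \square M_2$, $M_1 :: M_2$, $M_1\ @\ M_2$ and $\ifz{L}{M_1}{M_2}$ the argument is the routine one: substitution distributes over the constructor, the definition of $\eval{\cdot}$ is itself homomorphic for it, so applying the induction hypothesis to each immediate subterm and recombining gives the result; in the $\ifz{}{}{}$ case one further notes that the induction hypothesis applied to $L$ makes the two branches of the case split coincide, and in the $@$ case that applying it to $M_1$ and $M_2$ matches both the function and its argument.

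The real work, and the only genuine obstacle, is in the binding constructors $\lambda' y^P.M'$, $\Qfor{k}{V}{M'}$ and $\Qlet{y^P :: z^{\vec{P}{m}}}{M_1}{M_2}$. I would first use $\alpha$-conversion to assume the bound variables are fresh for $N$ and distinct from $x$, so that the substitution commutes with the binder, e.g. $(\lambda' y^P.M')[N/x] = \lambda' y^P.(M'[N/x])$. Unfolding $\eval{\cdot}$ on both sides then reduces the goal to an instance of the substitution equation for the body in the context enlarged by the freshly bound variables. Discharging that instance requires a small weakening property — that $\eval{N}$ is unchanged, up to the evident projection, when the context gains a variable not occurring in $N$ — which is itself an easy induction on $N$, or can be absorbed into a slightly strengthened induction hypothesis quantifying over the ambient context. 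With weakening in hand the induction hypothesis applied to the body closes each case: for $\lambda'$ it yields equality of the two functions pointwise in the bound argument; for $\Qfor$ it matches both the index vector $\eval{V}$ and the summand $\eval{M'}$, hence the two iterated tensor products; and for the vector $\Qlet$ it matches the destructured vector $\eval{M_1}$ componentwise and then the body $\eval{M_2}$. Everything else is bookkeeping: tracking the positions of $x$ and of the bound variables inside the argument tuples, and keeping the $\alpha$-renaming hygienic.
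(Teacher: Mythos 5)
Your proof is correct and follows exactly the same strategy as the paper, whose entire proof reads ``Proof by straightforward induction on $M$''; you have simply supplied the details (base cases, homomorphic cases, and the binder cases with $\alpha$-renaming and weakening) that the paper leaves implicit. No discrepancies to report.
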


\begin{proof}
Proof by straightforward induction on $M$.
\end{proof}

\begin{lemmaN}{\!\textbf{(\ref{lem:eval-type})}}
    Given an evaluable type $A$ and a type judgement $\Phi \vdash M : A$,
    $\eval{M}_\Phi \in \bigtimes_{x:P \in \Phi} \eval{P} \to \eval{A}$.
\end{lemmaN}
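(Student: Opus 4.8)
The plan is to prove the statement by induction on the derivation of $\Phi \vdash M : A$. Because the type system of Figure~\ref{fig:linear-fragment-typing} is syntax-directed, this coincides with structural induction on $M$, and the clauses defining $\eval{\cdot}_\Phi$ in the previous subsection are organised along exactly the same cases; each inductive step will then consist of unfolding one such clause and checking that the function it builds has domain $\bigtimes_{x:P \in \Phi} \eval{P}$ and codomain $\eval{A}$. Note also that the statement bundles two assertions — that $\eval{M}_\Phi$ is defined (total) and that it has the stated type — which are discharged together by this induction. The first step is to pin down which typing rules can conclude with an \emph{evaluable} type $A = (n_1:\nat) \to \dots \to (n_k:\nat) \to P$: these are $\mathsf{ax}$ applied to a parameter variable, $\mathsf{ax}_n$, the arithmetic rule $(\square)$, the list-constructor rule on a $\nat$-vector, the $\vnil$ rule, $(\rightarrow_i)$, $(\rightarrow_e)$, $(ifz)$, $(for)$, the destructor $(let_{vec})$ on a parameter vector, and the primitive $\Qrange$ — precisely the rules for which the definition of $\eval{\cdot}_\Phi$ supplies a matching clause. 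Every remaining rule concludes with a genuine state type, so it cannot occur here. This observation does double duty: it makes the case analysis finite, and it shows the induction is well-founded, since in each of the listed rules the immediate subterms again carry evaluable types (strictly smaller ones, or $\nat$, or $\vec{\nat}{m}$ — all evaluable), so the inductive hypothesis applies to them.

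Then I would run through the cases, which are mostly mechanical. For $\mathsf{ax}_n$ and $\vnil$ the clause yields a constant function into $\N$ (respectively into $\N^0$); for $\mathsf{ax}$ on a parameter variable it is a projection out of $\bigtimes_{x:P \in \Phi}\eval{P}$. For $(\square)$ and the list constructor the inductive hypothesis puts the evaluations of the two immediate subterms into $\N$ (respectively into $\N$ and $\N^n$), and then applying $\square$ (respectively pairing) lands in $\eval{A}$. For $(ifz)$ and $(let_{vec})$ one feeds the result of evaluating the scrutinee — correctly typed by the inductive hypothesis — into the inductive hypothesis for the two branches (respectively for the body). For $\Qfor{k}{V}{M}$ of type $\vec{A[k]}{n}$, this type is evaluable only when the state context is empty and $A[k] = \nat$, and then by the inductive hypothesis $\eval{V}_\Phi(|\Phi|)$ is a list of $n$ naturals over which the product $\bigtimes$ ranges, with each factor $\eval{M}_{k:\nat,\Phi}(k,|\Phi|)$ lying in $\eval{\nat} = \N$, so the whole lands in $\N^n = \eval{\vec{\nat}{n}}$.

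The part I expect to require the most care is the dependent-type bookkeeping in $(\rightarrow_i)$ and $(\rightarrow_e)$. There $\eval{A}$ is $\N \to \bigcup_{n \in \N} \eval{B[n]}$, and one must verify that $\eval{\lambda' x^P . M}_\Phi$, when handed a value $v \in \N$, returns an element of exactly $\eval{B[v]}$ rather than merely of the union — this uses the inductive hypothesis for $M$ in the context $x:\nat, \Phi$ — and dually that in $\eval{M\ @\ N}_\Phi$ the value $\eval{N}_\Phi(|\Phi|) \in \N$ selects the corresponding component, so that the output lies in $\eval{B[n/N]}$. A smaller but still necessary hygiene point, needed for totality, is that the operators $\square \in \{+, -, \times, /, \wedge\}$ are read as total functions $\N \times \N \to \N$ (truncated subtraction, integer division with a fixed convention at $0$, and so on), and that $\Qrange$ and $\Qfor{k}{V}{M}$ return lists of the expected lengths $m - n$ and $n$, so that the codomains $\N^{m-n}$ and $\N^n$ come out as claimed.
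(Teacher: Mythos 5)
Your proposal is correct and follows essentially the same route as the paper: induction on the typing derivation (equivalently, structural induction on $M$), restricted to the finitely many rules that can conclude with an evaluable type, checking case by case that the corresponding clause of $\eval{\cdot}_\Phi$ produces an element of $\bigtimes_{x:P \in \Phi} \eval{P} \to \eval{A}$. The extra attention you give to the dependent bookkeeping in $(\rightarrow_i)$/$(\rightarrow_e)$ and to totality of the arithmetic operators goes slightly beyond what the paper spells out, but does not change the argument.
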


\begin{proof}
    By induction on the typing judgement $\Phi \vdash M : A$:
    \begin{itemize}
        \item If $\Phi\vdash n:\nat$, then $\eval{n}_\Phi = |\Phi| \mapsto n \in  \bigtimes_{x:P \in \Phi} \eval{P} \to \N$.

        \item If $\Phi, x:A\vdash x:A$, then $\eval{x}_{x:A, \Phi} = x,|\Phi| \mapsto x \in \bigtimes_{y:P \in x:A,\Phi} \eval{P} \to \eval{A}$.

        \item If $\Phi\vdash M\square N:\nat\Phi,\Gamma,\Delta\vdash M\!::\!N\ :\vec{A}{(n+1)}$, then $\eval{M \square N}_\Phi = |\Phi| \mapsto \eval{M}_\Phi(|\Phi|) \square$
        $\eval{N}_\Phi($ $|\Phi|$ $)$. By inductive hypothesis, $\eval{M}_\Phi(|\Phi|), \eval{M}_\Phi(|\Phi|)\in \N$. Then,
        $|\Phi| \mapsto \eval{M}_\Phi(|\Phi|) \square \eval{N}_\Phi(|\Phi|)\in \bigtimes_{x:P \in \Phi} \eval{P} \to \N$.
        
        \item If $\Phi\vdash\lambda' x.M:(x:P)\to B$ then $\eval{\lambda' x^P.M}_\Phi = x,|\Phi| \mapsto \eval{M}_\Phi(x,|\Phi|)$. By inductive hypothesis, $\eval{M}_\Phi(x,|\Phi|)\in \eval{B}$. Then, $|\Phi| \mapsto \eval{M}_\Phi(x,|\Phi|)\in \bigtimes_{y:P \in x:P\Phi} \eval{P} \to \eval{B}$.

        \item If $\Phi,\Gamma\vdash M @ N:B[x/r]$, then $\eval{M @ N}_\Phi = |\Phi| \mapsto\eval{M}_\Phi(\eval{N}_\Phi(|\Phi|), \Phi)$. By inductive hypothesis, $\eval{N}_\Phi(|\Phi|)\in \N$ and $x\mapsto\eval{M}_\Phi(x, \Phi)\in \eval{x:\nat\to B[x]}$. Then, $|\Phi|\mapsto\eval{M}_\Phi(\eval{N}_\Phi(|\Phi|), \Phi) \in \bigtimes_{y:P \in \Phi} \eval{P} \to \eval{B[A/x]}$.

        \item If $\Phi\vdash \vnil^A:\vec A 0$, then $\eval{\vnil^P}_{\Phi} = |\Phi| \mapsto []\in \bigtimes_{x:P \in \Phi} \eval{P} \to \N^0$.

        \item If $\Phi,\Gamma,\Delta\vdash M\!::\!N\ :\vec{A}{(n+1)}$, then $\eval{M\ ::\ N}_{\Phi} = |\Phi| \mapsto \eval{M}_\Phi(|\Phi|) \times \eval{N}_\Phi(|\Phi|)$. By inductive hypothesis ${M}_\Phi(|\Phi|)\in\eval{A}$ and $\eval{N}_\Phi(|\Phi|)\in\eval{A}^n$. Then, $|\Phi| \mapsto \eval{M}_\Phi(|\Phi|) \times \eval{N}_\Phi(|\Phi|)\in\bigtimes_{x:P \in \Phi} \eval{P} \to \eval{A}^{n+1}$.

        \item If $\Phi, \Gamma \vdash \ifz{L}{M}{N} : A$, then \\
        $\eval{\ifz{L}{M}{N}}_\Phi = |\Phi| \mapsto \begin{cases}
            \eval{M}_\Phi(|\Phi|) \text{ if } \eval{L}_\Phi(|\Phi|) = 0 \\
            \eval{N}_\Phi(|\Phi|) \text{ otherwise}
        \end{cases}.$

        By inductive hypothesis $\eval{m}_\Phi(|\Phi|),\eval{N}_\Phi(|\Phi|)\in\eval{A}$ and $\eval{L}_\Phi(|\Phi|)\in\N$. \\
        Then $\eval{\ifz{L}{M}{N}}_\Phi\in\bigtimes_{x:P \in \Phi} \eval{P} \to \eval{A}$.

        \item If $\Phi\vdash \Qfor{k}{N}{M} : \vec A n$, then
        $\eval{\Qfor{k}{V}{M}}_\Phi = |\Phi| \mapsto \bigtimes_{k\in
        \eval{N}_\Phi(|\Phi|)} \eval{M}_{k:\nat\Phi}(k,$ $|\Phi|)$.
        By induction hypothesis, $\eval{N}_\Phi(|\Phi|)\in \nat^n$ and
        $x\mapsto\eval{M}_\Phi(x, \Phi)\in \eval{k:\nat\to A[k]}$. Then $|\Phi|
        \mapsto \bigtimes_{k\in \eval{N}_\Phi(|\Phi|)} \eval{M}_{k:\nat\Phi}(k,
        |\Phi|)\in\bigtimes_{x:P \in \Phi} \eval{P} \to\eval{A[k/N]}^n$.

        \item If $\Phi\vdash \Qlet{x^B : y^{\vec B n}}{M}{N}:A$, then
        $\eval{\Qlet{x^P :: y^{\vec{P}{n}}}{M}{N}}_\Phi =$\allowbreak
        $|\Phi|\mapsto$ \\
        $ \eval{N}_{x:P, y:\vec{P}{n}\Phi}(y_1,[y_2,\dots,y_n],|\Phi|)$
        where $[y_1,\dots,y_n] = \eval{M}_\Phi(|\Phi|)$.
        
        By inductive hypothesis
        $\eval{M}_\Phi(|\Phi|)\in \eval{B}^n$ and $\eval{N}_{x:P, y:\vec{P}{n}
        \Phi}(y_1,[y_2,\dots,y_n],|\Phi|)\in\eval{A}$. Then $|\Phi|\mapsto
        \eval{N}_{x:P, y:\vec{P}{n}
        \Phi}(y_1,[y_2,\dots,y_n],|\Phi|)\in\bigtimes_{x:P \in \Phi} \eval{P}
        \to \eval{A}$.

        \item If $\Phi\vdash \Qrange : (n:\nat)\to (m:\nat)\to \vec{\nat}{(m-n)}$, then $\eval{\Qrange}_\Phi = n,m,|\Phi| \mapsto \bigtimes_{i=n}^{m-1} i\in \bigtimes_{z:P \in x:\nat,y:\nat,\Phi} \eval{P} \to \N^{m-n}$
    \end{itemize}
\end{proof}

\begin{lemmaN}{\!\textbf{(\ref{lem:eval-reduction})}}
    Given an evaluable type $A$,
    a type judgement $\Phi \vdash M : A$, and $M \to N$, then $\eval{M}_\Phi = \eval{N}_\Phi$.
\end{lemmaN}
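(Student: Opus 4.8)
The plan is to argue by induction on the derivation of the reduction $M \to N$, doing a case analysis over the rewrite rules of Table~\ref{fig:rewrite-system} together with the $\Qrange$ clause of Table~\ref{fig:primitives-operation}. First I would record that the evaluable and parameter terms form a sublanguage closed under reduction: the clauses defining $\eval{\cdot}$ cover exactly the term formers $x$, $n$, $M \square N$, $M :: N$, $\vnil^\nat$, $\lambda' x.M$, $M @ N$, $\ifz{L}{M}{N}$, $\Qrange$, $\Qfor{k}{V}{M}$ and $\Qlet{x :: y}{M}{N}$, and since the typing rules are syntax-directed, every subterm of an evaluable-typed $M$ that can be the site of a reduction is itself of evaluable or parameter type. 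Consequently, in each congruence rule the inductive hypothesis applies to the reduced immediate subterm. I would also appeal to type preservation for $\lambdaD$ to get $\Phi \vdash N : A$, so that $\eval{N}_\Phi$ is well-defined by Lemma~\ref{lem:eval-type}.

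For the congruence rules the argument is then immediate: unfold the matching clause of $\eval{\cdot}$ on the compound term, rewrite the evaluation of the reduced subterm by the inductive hypothesis, and reassemble. For the base rules it is a direct computation against the defining clauses. For $(\lambda' x.M)@V \to M[V/x]$, unfolding the clauses for $@$ and $\lambda'$ gives $|\Phi| \mapsto \eval{M}_\Phi(\eval{V}_\Phi(|\Phi|), |\Phi|)$, which is $\eval{M[V/x]}_\Phi$ by the substitution Lemma~\ref{lem:eval-substitution}. For $\Qlet{x :: y}{M_1 :: M_2}{N} \to N[x/M_1][y/M_2]$ one observes that $\eval{M_1 :: M_2}_\Phi(|\Phi|)$ is the list with head $\eval{M_1}_\Phi(|\Phi|)$ and tail $\eval{M_2}_\Phi(|\Phi|)$, substitutes into the let-clause, and applies Lemma~\ref{lem:eval-substitution} twice. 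For $\ifz{V}{M}{N} \to M$ (resp. $\to N$), since $\eval{V}_\Phi(|\Phi|) = V$ for a numeral $V$, the guarded clause selects precisely $\eval{M}_\Phi$ (resp. $\eval{N}_\Phi$); for $V_1 \square V_2 \to V$ with $V = n_1 \square n_2$ both sides are $|\Phi| \mapsto n_1 \square n_2$; and for $\Qfor{k}{\vnil}{N} \to \vnil$ both sides yield the empty product, since $\eval{\vnil^\nat}_\Phi(|\Phi|) = []$.

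The hard part will be the one-step unfolding of the iterator, $\Qfor{k}{M_1 :: M_2}{N} \to N[k/M_1] :: \Qfor{k}{M_2}{N}$, and the analogous recursive clause $\Qrange\ @n\ @m \to \ifz{m-n}{\vnil}{n :: \Qrange\ @(n+1)\ @m}$. Here one must reconcile the definition of $\eval{\Qfor{k}{V}{N}}_\Phi$ as a \emph{single} indexed product $\bigtimes_{k \in \eval{V}_\Phi(|\Phi|)} \eval{N}_{k:\nat,\Phi}(k, |\Phi|)$ with the operational rule, which peels off only the head of the list. The key step is to split that product, over the list with head $\eval{M_1}_\Phi(|\Phi|)$ and tail $\eval{M_2}_\Phi(|\Phi|)$, into its first factor $\eval{N}_{k:\nat,\Phi}(\eval{M_1}_\Phi(|\Phi|), |\Phi|)$ and the product over the tail, the latter being $\eval{\Qfor{k}{M_2}{N}}_\Phi(|\Phi|)$; by Lemma~\ref{lem:eval-substitution}, invoked in the extended parameter context $k:\nat,\Phi$, the first factor equals $\eval{N[k/M_1]}_\Phi(|\Phi|)$, so the two operands of $::$ agree. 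Treating the $\ifz$ on $m-n$ in the $\Qrange$ unfolding the same way, and threading the context extension through the substitution lemma, are the only points requiring care; all remaining cases are mechanical checks against the clauses defining $\eval{\cdot}$.
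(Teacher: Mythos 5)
Your proposal is correct and follows essentially the same route as the paper's proof: a case analysis matching each reduction rule against the defining clauses of $\eval{\cdot}$, discharging congruence cases by the inductive hypothesis, the $\beta$- and let-destructor cases by the substitution Lemma~\ref{lem:eval-substitution}, and the $\mathtt{for}$-unfolding by splitting the indexed product into its head factor and the product over the tail. The only cosmetic difference is that you induct on the reduction derivation while the paper inducts on the structure of the evaluated term, which yields the same case split.
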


\begin{proof}
    By induction on the evaluation function $\eval{M}_\Phi$:
    \begin{itemize}
        \item If $M=x$, $M=n$, $M=\vnil^\nat$, $M=\lambda'x. M'$, $M= M_1\ ::\ M_2$ or $M=\Qrange$ then $M$ is in normal form and it does not reduce.
        
        \item If $M = M_1\square M_2$ we have three cases:
        \begin{itemize}
            \item If $M\to M_1\square N$ with $M_2\to N$, then: 
            \begin{align*}
            \eval{M_1 \square M_2}_\Phi &= |\Phi| \mapsto \eval{M_1}_\Phi(|\Phi|) \square \eval{M_2}_\Phi(|\Phi|)\\ 
            &= |\Phi| \mapsto \eval{M_1}_\Phi(|\Phi|) \square \eval{N}_\Phi(|\Phi|)\\
            &= \eval{M_1 \square N}_\Phi\\
            \end{align*}

            \item If $M\to N\square V$ with $M_1\to N$, then: 
            \begin{align*}
            \eval{M_1 \square V}_\Phi &= |\Phi| \mapsto \eval{M_1}_\Phi(|\Phi|) \square \eval{V}_\Phi(|\Phi|)\\ 
            &= |\Phi| \mapsto \eval{N}_\Phi(|\Phi|) \square \eval{V}_\Phi(|\Phi|)\\
            &= \eval{N \square V}_\Phi\\
            \end{align*}

            \item If $M\to n$ with $M_i = n_i\in\mathbb{N}$ and $n= n_1\square n_2$, then: 
            \begin{align*}
            \eval{n_1 \square n_2}_\Phi &= |\Phi| \mapsto \eval{n_1}_\Phi(|\Phi|) \square \eval{n_2}_\Phi(|\Phi|)\\ 
            &= |\Phi| \mapsto n_1\square n_2\\
            &= |\Phi| \mapsto n\\
            &= \eval{n}_\Phi 
            \end{align*}
        \end{itemize}

        \item If $M= M_1\ @\ M_2$ we have three cases:
        \begin{itemize}
            \item If $M\to M_1\ @\ N$ with $M_2\to N$, then: 
            \begin{align*}
            \eval{M_1\ @\ M_2}_\Phi &= |\Phi| \mapsto \eval{M_1}_\Phi(\eval{M_2}_\Phi(|\Phi|), \Phi)\\
            &= |\Phi| \mapsto \eval{M_1}_\Phi(\eval{N}_\Phi(|\Phi|), \Phi)\\
            &= \eval{M_1\ @\ N}_\Phi
            \end{align*}

            \item If $M\to N\ @\ V$ with $M_1\to N$, then: 
            \begin{align*}
            \eval{M_1\ @\ V}_\Phi &= |\Phi| \mapsto \eval{M_1}_\Phi(\eval{V}_\Phi(|\Phi|), \Phi)\\
            &= |\Phi| \mapsto \eval{M_1}_\Phi(\eval{V}_\Phi(|\Phi|), \Phi)\\
            &= \eval{N\ @ V}_\Phi
            \end{align*}

            \item If $M\to M'[V/x]$ with $M_1 = \lambda' x.M'$ and $M_2 = V$, then: 
            \begin{align*}
            \eval{(\lambda' x.M)@ V}_\Phi &= |\Phi| \mapsto \eval{\lambda' x.M}_\Phi(\eval{V}_\Phi(|\Phi|), \Phi)\\
            &= |\Phi| \mapsto (x,|\Phi| \mapsto \eval{M}_{x,\Phi}(x,|\Phi|)) (\eval{V}_\Phi(|\Phi|), \Phi)\\ 
            &= |\Phi| \mapsto \eval{M[V/X]}_\Phi(|\Phi|)\\
            &= \eval{M[V/X]}_\Phi
            \end{align*}
        \end{itemize}

        \item If $M = \ifz{M'}{L}{R}$ we have three cases:
        \begin{itemize}
            \item If $M\to \ifz{N}{L}{R}$ with $M'\to N$, then: 
            \begin{align*}
                \eval{\ifz{M'}{L}{R}}_\Phi &= |\Phi| \mapsto 
                    \begin{cases}
                    \eval{M}_\Phi(|\Phi|) \text{ if } \eval{M'}_\Phi(|\Phi|) = 0 \\
                    \eval{N}_\Phi(|\Phi|) \text{ otherwise}
                    \end{cases}\\
                &= |\Phi| \mapsto 
                \begin{cases}
                \eval{M}_\Phi(|\Phi|) \text{ if } \eval{M'}_\Phi(|\Phi|) = 0 \\
                \eval{N}_\Phi(|\Phi|) \text{ otherwise}
                \end{cases}\\
                &= \eval{\ifz{N}{L}{R}}_\Phi
            \end{align*}

            \item If $M\to L$ with $M'= 0$, then: 
            \begin{align*}
                \eval{\ifz{M'}{L}{R}}_\Phi &= |\Phi| \mapsto 
                \begin{cases}
                \eval{M}_\Phi(|\Phi|) \text{ if } \eval{M'}_\Phi(|\Phi|) = 0 \\
                \eval{N}_\Phi(|\Phi|) \text{ otherwise}
                \end{cases}\\
            &= |\Phi| \mapsto \eval{L}_\Phi(|\Phi|)\\
            &= \eval{L}_\Phi
            \end{align*}

            \item The symmetric case for the else branch is similar to the previous one.
        \end{itemize}

        \item If $M = \Qfor{k}{M'}{R}$ we have three cases:
        \begin{itemize}
            \item If $M\to \Qfor{k}{N}{R}$ with $M'\to N$, then: 
            \begin{align*}
                \eval{\Qfor{k}{M'}{R}}_\Phi &= |\Phi| \mapsto \bigtimes_{k\in \eval{M'}_\Phi(|\Phi|)} \eval{R}_\Phi(k, |\Phi|)\\
                &= |\Phi| \mapsto \bigtimes_{k\in \eval{N}_\Phi(|\Phi|)} \eval{R}_\Phi(k, |\Phi|)\\
                &= \eval{\Qfor{k}{N}{R}}_\Phi
            \end{align*}

            \item If $M\to R[k/M_1]\ : \Qfor{k}{M_2}{R}$ with $M' = V::L$, then: 
            \begin{align*}
                \eval{\Qfor{k}{M_1\ ::\ M_2}{R}}_\Phi &= |\Phi| \mapsto \bigtimes_{k\in \eval{M_1\ ::\ M_2}_\Phi(|\Phi|)} \eval{R}_\Phi(k, |\Phi|)\\
                &= |\Phi| \mapsto \bigtimes_{k\in \eval{M_1}_\Phi(|\Phi|) \times \eval{M_2}_\Phi(|\Phi|)} \eval{R}_\Phi(k, |\Phi|)\\
                &= |\Phi| \mapsto \eval{R}_{k,\Phi}(\eval{M_1}_\Phi(|\Phi|), |\Phi|) \times \bigtimes_{k\in \eval{M_2}_\Phi(|\Phi|} \eval{R}_{k,\Phi}(k, |\Phi|)\\
                &= \eval{R[M_1/k]}_{\Phi} \times \eval{\Qfor{k}{M_2}{R}}_\Phi\\
                &= \eval{R[M_1/k]\ ::\ \Qfor{k}{M_2}{R}}_\Phi\\
            \end{align*}

            \item If $M\to \vnil$ with $M' = \vnil^\nat$, then: 
            \begin{align*}
                \eval{\Qfor{k}{\vnil^\nat}{R}}_\Phi &= |\Phi| \mapsto \bigtimes_{k\in \eval{\vnil^\nat}_\Phi(|\Phi|)} \eval{R}_\Phi(k, |\Phi|)\\
                &= |\Phi| \mapsto \bigtimes_{k\in []} \eval{R}_\Phi(k, |\Phi|)\\
                &= |\Phi| \mapsto []\\
                &= \eval{\vnil^\nat}_\Phi
            \end{align*}
        \end{itemize}

        \item If $M = \Qlet{x::y}{M_1}{M_2}$ we have two cases:
        \begin{itemize}
            \item If $M\to \Qlet{x::y}{N}{M_2}$ with $M_1\to N$, then: 
            \begin{align*}
            \eval{\Qlet{x::y}{M_1}{M_2}}_\Phi &= |\Phi|\mapsto \eval{M_2}_\Phi(y_1,[y_2,\dots,y_n],|\Phi|){\scriptstyle{\text{ where } [y_1,\dots,y_n] = \eval{M_1}_\Phi(|\Phi|)}}\\
            &= |\Phi|\mapsto 
            \eval{M_2}_\Phi(y_1,[y_2,\dots,y_n],|\Phi|)
        {\scriptstyle{\text{ where } [y_1,\dots,y_n] = \eval{N}_\Phi(|\Phi|)}}\\
            &= \eval{\Qlet{x::y}{N}{M_2}}_\Phi
            \end{align*}

            \item If $M\to N[x/M_1][y/M_2]$ with $M' = M_1::M_2$, then: 
            \begin{align*}
                \eval{\Qfor{k}{M_1\ ::\ M_2}{N}}_\Phi &= |\Phi|\mapsto \eval{N}_{x,y,\Phi}(y_1,[y_2,\dots,y_n],|\Phi|)\\
                &= |\Phi|\mapsto \eval{N}_{x,y,\Phi}(M_1, M_2,|\Phi|)\\
                &= |\Phi|\mapsto \eval{N[M_1/x][M_2/y]}_\Phi(|\Phi|)\\
                &= \eval{N[M_1/x][M_2/y]}_\Phi
            \end{align*}
            where $[y_1,\dots,y_n] = \eval{M_1::M_2}_\Phi(|\Phi|)$.
        \end{itemize}

        \item If $M= \Qrange\ @n\ @M_2$ then:
        \begin{align*}
            \eval{\Qrange @n @m}_\Phi &= |\Phi| \mapsto \bigtimes_{i=n}^{m-1} i\\
            &=|\Phi| \mapsto \begin{cases}
                [] \text{ if } n-m = 0 \\
                n\times \bigtimes_{i=n+1}^{m-1} i \text{ otherwise}
            \end{cases}\\
            &=|\Phi| \mapsto \begin{cases}
                [] \text{ if } \eval{n-m}_\Phi = 0 \\
                \eval{n}_\Phi\times \eval{\Qrange @(n+1) @ m)} \text{ otherwise}
            \end{cases}\\
            &= \eval{\ifz{m-n}{\vnil}{n\ :: \Qrange\ @(n+1)\ @m}}_\Phi
        \end{align*}
    \end{itemize}
\end{proof}

\begin{lemmaN}{\!\textbf{(\ref{lem:trans-reduction})}}
    The translation procedure is correct in respect to the operational semantics.
    If $A$ is a translatable type, $\Phi, \Gamma \vdash M : A$, and $M \to N$,
    then $\trans{M}_{\Phi,\Gamma} = \trans{N}_{\Phi,\Gamma}$.
\end{lemmaN}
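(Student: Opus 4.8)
The plan is to induct on the derivation of $M \to N$, following the small-step system of Table~\ref{fig:rewrite-system} together with the primitive reductions of Table~\ref{fig:primitives-operation}, separating the contextual (congruence) rules from the base reduction axioms. Throughout, $=$ between diagrams denotes provable equality in the SZX calculus (equivalently, equality of the interpreted maps modulo scalars), and we use that this relation is a congruence for $\otimes$ and $\circ$ and that $\trans{\cdot}_{\Phi,\Gamma}$ is compositional, i.e.\ the translation of a compound term is a fixed diagram context with the translations of its immediate subterms plugged into designated wires.

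For the contextual rules the argument is uniform. If the redex occurs in a state-typed position of the context, the reduced subterm again has a translatable type, the inductive hypothesis applies to it, and by compositionality and congruence the equality propagates to the whole term. The only other possibility is that the redex sits in a parameter position — the argument of an $\,@\,$, the guard of an $\ifz$, or the list argument of a $\Qfor$ — in which case the subterm has an evaluable type and the surrounding diagram depends on it only through $\eval{\cdot}$; Lemma~\ref{lem:eval-reduction} then gives that $\eval{\cdot}$ is unchanged, so the instantiated diagram is literally the same.

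For the base cases we distinguish three groups. (i) The $\beta$-style axioms — $(\lambda x.M)V \to M[V/x]$, $(\lambda' x.M)\,@\,V \to M[V/x]$, and the two $\mathtt{let}$ destructors for $\otimes$ and for $::$ — all reduce to an instance of a \emph{substitution lemma} for the translation: feeding $\trans{V}$ into the $x$-input wire of $\trans{M}$ yields exactly $\trans{M[V/x]}$, invoking additionally Lemma~\ref{lem:eval-substitution} in the parameter case. This lemma is proved by a routine induction on $M$; for the $\mathtt{let}$ cases one first cancels, via the rule $\mathbf{(sg)}$, the split the destructor introduces against the gather coming from the $M_1 \otimes M_2$ (resp.\ $M_1 :: M_2$) constructor. (ii) The administrative axioms are immediate: $\ifz{V}{M}{N}$ with $V$ a literal is resolved because the Kronecker-$\delta$ factor in the translation of $\ifz$ selects the correct branch once $l = \eval{V}$ is concrete, and $\star\,;\,M \to M$, $\vnil\,;_v\,M \to M$ collapse because the discard composed with $\trans{M}$ simplifies through the zero-cardinality rules. (iii) The recursion axioms: $\Qfor{k}{M_1 :: M_2}{N} \to N[k/M_1] :: \Qfor{k}{M_2}{N}$ is handled by noting $\eval{M_1 :: M_2} = \eval{M_1} :: \eval{M_2}$ and applying the list-instantiation append identity (Lemma~\ref{lem:list-instantiation}), then identifying the head component with $\trans{N[k/M_1]}$ via the parameter substitution lemma, while $\Qfor{k}{\vnil}{N} \to \vnil$ is exactly Lemma~\ref{lem:list-init-zero}. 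Each primitive reduction of Table~\ref{fig:primitives-operation} is checked by a sub-induction on its numeric argument: for $\Qsplit$, $\Qappend$ and $\Qdrop$ the translations are identities or discards, so the one-step unfolding changes nothing up to $\mathbf{(sg)}$, $\mathbf{(gs)}$ and the zero rules; for $\Qaccumap$ the unfolding peels off a single iteration and the required identity is precisely the inductive step already carried out in the proof of Lemma~\ref{lem:accumap}.

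The main obstacle is stating and proving the substitution lemma in a form that uniformly covers both state and parameter variables and interacts cleanly with the gather and permutation bookkeeping — in particular verifying that the permutation arrows $\tau_{n,A,B,C}$ and $\sigma(N,v,w)$ commute as needed with substitution and with the append decomposition of the list-instantiation operator. Relatedly, the $\Qaccumap$ case requires care to line up the translation (built from the Lemma~\ref{lem:accumap} construction with a function-vector input rather than a single function box) against the one-step unfolding, which also passes the untouched tail of the input vector through. All remaining cases are direct diagram chases using the rewrite rules collected in Appendix~\ref{sec:szx-extended}.
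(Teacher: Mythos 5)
Your proposal follows essentially the same route as the paper's proof: a case analysis on the reduction axioms, discharging the $\beta$/$\mathtt{let}$ cases with $\mathbf{(sg)}$/$\mathbf{(gs)}$, the $\mathtt{for}$ and $\mathtt{ifz}$ cases with Lemmas~\ref{lem:list-instantiation} and~\ref{lem:list-init-zero}, the primitive reductions by direct diagram chases with the zero-cardinality and permutation rules, and the contextual rules by the inductive hypothesis together with Lemma~\ref{lem:eval-reduction}. The only notable difference is that you explicitly isolate a substitution lemma for the diagram translation, which the paper leaves implicit when it identifies the $\mathbf{(gs)}$-rewritten diagram with $\trans{M'[V/x]}$ --- a reasonable strengthening of the write-up rather than a divergence in method.
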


\begin{proof}
    By case analysis on the reductions of translatable terms.
    \begin{itemize}
        \item If $M = (\lambda x^A.M')V$ and $N=M'[V/x]$,
            \[
                \trans{(\lambda x^A.M')V}_{\Phi,\Delta,\Gamma}
                =
                |\Phi|\mapsto\tikzfig{proof/trans-reduct-apply-10}
            \]
            \[
                \;\overset{\scriptstyle{\bf{(gs)}}}{=}\;
                |\Phi|\mapsto\tikzfig{proof/trans-reduct-apply-20}
                =
                \trans{M'[V/x]}_{\Phi,\Delta,\Gamma}
            \]
        \item If $M = (\lambda' x^A.M')@V$ and $N=M'[V/x]$,
            \[
                \trans{(\lambda' x^A.M')@V}_{\Phi,\Gamma}
                =
                |\Phi|\mapsto\tikzfig{proof/trans-reduct-applyP-10}
            \]
            \[
                =
                |\Phi|\mapsto\tikzfig{proof/trans-reduct-applyP-20}
                =
                \trans{M'[V/x]}_{\Phi,\Gamma}
            \]
        \item If $M = \Qlet{x^A\otimes y^B}{V_1 \otimes V_2}{M'}$ and $N=M'[V_1/x][V_2/y]$,
            \[
                \trans{\Qlet{x^A\otimes y^B}{V_1 \otimes V_2}{M'}}_{\Phi,\Gamma,\Delta,\Lambda}
                =
                |\Phi|\mapsto\tikzfig{proof/trans-reduct-let-10}
            \]
            \[
                \;\overset{\scriptstyle{\bf{(gs)}}}{=}\;
                |\Phi|\mapsto\tikzfig{proof/trans-reduct-let-20}
                =
                \trans{M'[V_1/x][V_2/y]}_{\Phi,\Delta,\Gamma,\Lambda}
            \]
        \item If $M = \Qlet{x^A :: y^{\vec{A}{n}}}{V_1 :: V_2}{M'}$ and $N=M'[V_1/x][V_2/y]$,
            \[
                \trans{\Qlet{x^A :: y^{\vec{A}{n}}}{V_1 :: V_2}{M'}}_{\Phi,\Gamma,\Delta,\Lambda}
                =
                |\Phi|\mapsto\tikzfig{proof/trans-reduct-letVec-10}
            \]
            \[
                \;\overset{\scriptstyle{\bf{(gs)}}}{=}\;
                |\Phi|\mapsto\tikzfig{proof/trans-reduct-letVec-20}
                =
                \trans{M'[V_1/x][V_2/y]}_{\Phi,\Delta,\Gamma,\Lambda}
            \]
        \item If $M = \ifz{L}{M'}{N'}$,
            \begin{itemize}
                \item if $L=0$ and $N = M'$, $\eval{L}_\Phi = 0$ and
                    \[
                        \trans{\ifz{L}{M'}{N'}}_{\Phi,\Gamma}
                        =
                        |\Phi|\mapsto\tikzfig{proof/trans-reduct-ifz-10}
                    \]
                    \[
                        \;\overset{\scriptstyle{Lemma~\ref{lem:list-init-zero}}}{=}\;
                        |\Phi|\mapsto\tikzfig{proof/trans-reduct-ifz-20}
                        \;\overset{\scriptstyle{\bf{(\emptyset 4)}}}{=}\;
                        |\Phi|\mapsto\tikzfig{proof/trans-reduct-ifz-30}
                        =
                        \trans{M'}_{\Phi,\Gamma}
                    \]
                \item The case where $L>0$ and $N = N'$ is symmetric to the case above.
            \end{itemize}
        \item If $M = \vnil;M'$ and $N=M'$,
            \[
                \trans{\vnil;_v M'}_{\Phi,\Gamma}
                =
                |\Phi|\mapsto\tikzfig{proof/trans-reduct-then-10}
                =
                \trans{M'}_{\Phi,\Gamma}
            \]
        \item If $M = \star;M'$ and $N=M'$,
            \[
                \trans{\star;M'}_{\Phi,\Gamma}
                =
                |\Phi|\mapsto\tikzfig{proof/trans-reduct-then-10}
                =
                \trans{M'}_{\Phi,\Gamma}
            \]
        \item If $M = \Qfor{k}{V :: M'}{N'}$ and $N=N'[k/V]::\Qfor{k}{M'}{N'}$,
            \[
                \trans{\Qfor{k}{V :: M'}{N'}}_{\Phi,\Gamma^{\otimes n}}
                =
                |\Phi|\mapsto\tikzfig{proof/trans-reduct-for-10}
            \]
            \[
                \;\overset{\scriptstyle{\text{Lemma~\ref{lem:list-instantiation}}}}{=}\;
                |\Phi|\mapsto\tikzfig{proof/trans-reduct-for-20}
                =
                \trans{N'[k/V]::\Qfor{k}{M'}{N'}}_{\Phi,\Gamma^{\otimes n}}
            \]
        \item If $M = \Qfor{k}{\vnil}{N'}$ and $N=\vnil$,
            \[
                \trans{\Qfor{k}{\vnil}{N'}}_{\Phi}
                =
                |\Phi|\mapsto\tikzfig{proof/trans-reduct-forNil-10}
            \]
            \[
                \;\overset{\scriptstyle{Lemma~\ref{lem:list-init-zero}}}{=}\;
                |\Phi|\mapsto\tikzfig{proof/trans-reduct-forNil-20}
                =
                |\Phi|\mapsto\tikzfig{proof/trans-reduct-forNil-30}
                =
                \trans{\vnil}_{\Phi}
            \]
        \item If $M = \Qaccumap_{A,B,C}\ N'\ M_{xs}\ M_{fs}\ M_{z}$ and \(N=
                \ifz{N'}{xs\ ;_v\ fs\ ;_v\ \vnil\otimes M_z}{}$\\
                $\Qlet{x :: xs'}{M_{xs}}{} \Qlet{f :: fs'}{M_{fs}}{}
                \Qlet{y\otimes z'}{f\ x\ z}{}
                \mathtt{let\ } ys \otimes z'' = \Qaccumap\ @(N'-1)\ xs'\ fs'\ z'
                \mathtt{\ in\ } (y :: ys) \otimes z''
            \). Let $n = \eval{N_1}_\Phi(|\Phi|)$.

            Notice that, by definition of $\tau_{n,A,B,C}$,
            \[
                \tikzfig{proof/tau-n}
                \;\overset{\scriptstyle{{\bf(p)}}}{=}\;
                \tikzfig{proof/tau-n1}
            \]

            Therefore,

            \(
                \trans{\Qaccumap_{A,B,C}\ N'\ M_{xs}\ M_{fs}\ M_{z}}_{\Phi,\Gamma,\Delta,\Pi}
            \)
            \[
                =
                |\Phi|\mapsto\tikzfig{proof/trans-reduct-accumap-10}
            \]
            \[
                \;\overset{\scriptstyle{{\bf(p,gs,sg)}}}{=}\;
                |\Phi|\mapsto\scalebox{0.95}{\tikzfig{proof/trans-reduct-accumap-20}}
            \]
            \(
                =
                \llbracket
                    \ifz{N'}{xs\ ;_v\ fs\ ;_v\ \vnil\otimes M_z}{}
                    \Qlet{x :: xs'}{M_{xs}}{} \Qlet{f :: fs'}{M_{fs}}{}
                    \mathtt{let\ } y\otimes z' = f\ x\ z \mathtt{\ in\ }
                    \Qlet{ys \otimes z''}{\Qaccumap\ @(N'-1)\ xs'\ fs'\ z'}{}
                    (y :: ys) \otimes z''
                \rrbracket_{\Phi,\Gamma,\Delta,\Pi}
            \)

        \item If $M = \Qsplit_A\ @n\ @m\ xs$ and \(N=
            \ifz{n}{\vnil\otimes xs}{}
            \Qlet{y :: xs'}{xs}{}
            \mathtt{let\ } ys_1\otimes ys_2 = \Qsplit @(n-1)\ @m\ xs' \mathtt{\ in\ }
            (y :: ys_1)\otimes ys_2
            \). Let $n = \eval{N_1}_\Phi(|\Phi|)$ and $m = \eval{N_2}_\Phi(|\Phi|)$.
            \[
                \trans{\Qsplit_A\ @n\ @m\ xs}_{\Phi,\Gamma}
                =
                |\Phi|\mapsto\tikzfig{proof/trans-reduct-split-10}
                \;\overset{\scriptstyle{{\bf(sg,gs)}}}{=}\;
                |\Phi|\mapsto\tikzfig{proof/trans-reduct-split-20}
            \]
            \[
                \;\overset{\scriptstyle{{\bf(\emptyset 4)},Lemma~\ref{lem:list-init-zero}}}{=}\;
                |\Phi|\mapsto\tikzfig{proof/trans-reduct-split-30}
            \]
            \[
                \;\overset{\scriptstyle{{\bf(sg,gs,\emptyset 4)}}}{=}\;
                |\Phi|\mapsto\tikzfig{proof/trans-reduct-split-40}
            \]
            \(
                =
                \llbracket
                    \ifz{n}{\vnil\otimes xs}{}
                    \Qlet{y :: xs'}{xs}{}
                    \Qlet{ys_1\otimes ys_2}{\Qsplit @(n-1)\ @m\ xs'}{}
                    (y :: ys_1)\otimes ys_2
                \rrbracket_{\Phi,\Gamma}
            \)
        \item If $M = \Qappend_A\ @N_1\ @N_2\ M_1\ M_2$ and \(N=
            \ifz{N_1}{M_1\ ;_v\ M_2}{}
            \Qlet{x :: xs'}{M_1}{}
            x :: (\Qappend\ @(N_1-1)\ @N_2\ M_1\ M_2)
            \). Let $n = \eval{N_1}_\Phi(|\Phi|)$ and $m = \eval{N_2}_\Phi(|\Phi|)$.
            \[
                \trans{\Qappend_A\ @N_1\ @N_2\ M_1\ M_2}_{\Phi,\Gamma,\Delta}
                =
                |\Phi|\mapsto\tikzfig{proof/trans-reduct-append-10}
            \]
            \[
                \;\overset{\scriptstyle{{\bf(\emptyset 4)},Lemma~\ref{lem:list-init-zero}}}{=}\;
                |\Phi|\mapsto\tikzfig{proof/trans-reduct-append-20}
            \]
            \[
                \;\overset{\scriptstyle{{\bf(sg,gs)}}}{=}\;
                |\Phi|\mapsto\tikzfig{proof/trans-reduct-append-30}
            \]
            \[
                \;\overset{\scriptstyle{{\bf(i1, \emptyset 1, \emptyset 4, sg, gs)}}}{=}\;
                |\Phi|\mapsto\tikzfig{proof/trans-reduct-append-40}
            \]
            \(
                =
                \left\llbracket\ifz{N_1}{M_1\ ;_v\ M_2}{}
                    \Qlet{x :: xs'}{M_1}{}
                    x :: (\Qappend\ @(N_1-1)\ @N_2\ M_1\ M_2\right
                \rrbracket_{\Phi,\Gamma,\Delta}
            \)
        \item If $M = \Qdrop\ @N'\ M'$ and \(N=
            \ifz{N'}{M'\ ;\ \star}{} \Qlet{x :: xs'}{M'}{} x\ ;\ \Qdrop\ @(N'-1)\ xs'
            \). Let $l = \eval{N'}_\Phi(|\Phi|)$,
            \[
                \trans{\Qdrop\ @N'\ M'}_{\Phi,\Gamma}
                =
                |\Phi|\mapsto\tikzfig{proof/trans-reduct-drop-10}
                \;\overset{\scriptstyle{{\bf(\emptyset 4)},Lemma~\ref{lem:list-init-zero}}}{=}\;
                |\Phi|\mapsto\tikzfig{proof/trans-reduct-drop-20}
            \]
            \[
                \;\overset{\scriptstyle{{\bf(z3,\emptyset 4)}}}{=}\;
                |\Phi|\mapsto\tikzfig{proof/trans-reduct-drop-30}
            \]
            \[
                =
                \trans{\ifz{N'}{M'\ ;\ \star}{\Qlet{x :: xs'}{M'}{x\ ;\ \Qdrop\ @(N'-1)\ xs'}}}_{\Phi,\Gamma}
            \]
        \item If $M \to N$ is an internal reduction of a translatable term,
            then the diagrams result equivalent via the inductive hypothesis.
        \item If $M \to N$ is an internal reduction of an evaluable term,
            then the diagrams result equivalent via the inductive hypothesis
            and Lemma~\ref{lem:eval-reduction}.
    \end{itemize}
\end{proof}
\fi

\end{document}